\documentclass[12pt]{article}
\usepackage[utf8]{inputenc}
\usepackage{eurosym}
\usepackage{amssymb}
\usepackage{amsfonts}
\usepackage{amsmath}
\usepackage[authoryear,round]{natbib}
\usepackage{float}
\usepackage[margin=.9in]{geometry}
\usepackage{soul}
\usepackage[hang]{caption}
\usepackage{color}
\usepackage{lscape}
\usepackage{rotating}
\usepackage{xcolor}
\usepackage{setspace}
\usepackage{geometry}
\usepackage{graphicx}
\usepackage{subfig}
\usepackage{multicol}
\usepackage{etoolbox}
\usepackage{relsize}
\usepackage{multirow}
\usepackage{lscape}
\usepackage{amsmath}
\usepackage{amssymb}
\usepackage{tabularx}
\usepackage{threeparttable}
\usepackage{xr-hyper}
\usepackage{hyperref}
\usepackage{caption}
\usepackage{booktabs}
\newtheorem{theorem}{Theorem}[section]

\newtheorem{assumption}{Assumption}

\newtheorem{definition}{Definition}
\newtheorem{example}[theorem]{Example}

\newtheorem{lemma}{Lemma}[section]

\newtheorem{proposition}{Proposition}[section]
\newtheorem{remark}{Remark}

\newenvironment{proof}[1][Proof]{\textbf{#1.} }{\ \rule{0.5em}{0.5em}}
\begin{document}

\title{Semiparametric Local Projections\thanks{%
We thank Jiaming Huang, Carlos Lamarche, Florian Gunsilius, seminar participants at Vanderbilt University, University of Kansas, Wake Forest University and Pittsburgh University as
well as participants at the IAAE 2023, CIREQ 2023, MEG 2024, XV Time Series Workshop 2025, CEF-CM Statisics 2025, and ICEEE 2025 conferences for helpful comments. Jung
Jae Kim and Cristhian Rosales-Castillo provided excellent research assistance. This paper is based on
research supported by the NSF under Grants No. SES-2417534 and SES-2417535 and by a Natural Sciences and Engineering Research Council of Canada (NSERC) Grant No. RGPIN-2021-02663.
The views expressed in this paper are those of the authors and do not
necessarily represent the views of the Federal Reserve Bank of Dallas or the
Federal Reserve System. Refine.ink was used to check the paper for consistency and clarity.} }
\author{S\'{\i}lvia Gon\c{c}alves\thanks{McGill University, Department of Economics, 855 Sherbrooke St. W., Montr\'{e}al, Qu\'{e}bec, H3A 2T7, Canada. E-mail: silvia.goncalves@mcgill.ca.} \and
Ana Mar\'{\i}a Herrera\thanks{University of Kentucky, Department of Economics, 550 South Limestone, Lexington, KY 40506-0034, USA. E-mail: amherrera@uky.edu.} \and
Lutz Kilian\thanks{Federal Reserve Bank of Dallas, Research Department, 2200 N. Pearl St., Dallas, TX 75201, USA. E-mail: lkilian2019@gmail.com.} \and
Elena Pesavento\thanks{Emory University, Economics Department, 1602 Fishburne Dr. Atlanta, GA 30322, USA. E-mail: epesave@emory.edu.} \and
Iones Kelanemer Holban\thanks{McGill University, Department of Economics, 855 Sherbrooke St. W., Montr\'{e}al, Qu\'{e}bec, H3A 2T7, Canada.}}

\maketitle

\begin{abstract}
We propose a semiparametric local projection estimator of nonlinear impulse 
response functions for a broad class of structural dynamic models 
relevant for applied macroeconomics, including models with nonlinearly 
transformed regressors, state dependent coefficients, and nonlinear 
interactions between shocks and state variables. The estimator is based on 
a doubly robust moment condition that identifies the average response 
function as a linear functional of a nonparametric conditional mean, 
augmented by a density ratio that captures the effect of shifting the shock 
of interest. We combine this moment condition 
with cross-fitting that handles serial dependence. The resulting 
estimator is $\sqrt{T}$-consistent and asymptotically normal. We examine the finite-sample performance of the estimator across 
a range of nonlinear data generating processes and illustrate its use in two empirical examples.
\

JEL codes: C14, C32, E52, Q43

Keywords: impulse response, local projection, semiparametric estimation,
double machine learning, nonlinear structural model, potential outcomes.
\end{abstract}

\doublespacing

\section{Introduction}

Impulse response analysis is a cornerstone of empirical macroeconomics.
Local projections have become a popular method for estimating impulse
response functions (IRFs). In their simplest form, local projections consist of a
sequence of OLS regressions, one for each horizon of interest. The impulse
response of interest may be recovered from the estimated regressions without
further transformations of the model coefficients or the need for Monte
Carlo integration methods.

A large empirical literature has used
generalizations of linear local projections to evaluate state-dependent
impulse responses and other nonlinear responses. However, as shown in
\citet{goncalves2021,goncalves2024b, goncalves2024a}, such state-dependent local projections fail to
recover the population responses when the state is endogenous and the shock is large as in much of applied work (e.g., \citet{ramey2018}).

In this paper, we propose a semiparametric local projection estimator of
nonlinear impulse response functions that is valid across a broad range of
nonlinear settings relevant for applied macroeconomists. These include
processes with nonlinearly transformed regressors
\citep{herrera2015, tenreyro2016, benzeev2023, caravello2024},
state-dependent coefficients \citep{ramey2018}, and nonlinear
interactions between shocks and state variables
\citep{caramp2026monetary, cloyne2020decomposing, cloyne2023statedependent}. 

As is common in macroeconomics, our object of interest is the impulse response function of an outcome variable $y_{t+h}$ with respect to the primitive structural shock $\varepsilon_{1t}$ in the equation for variable $x_t$. Specifically, we aim to identify and estimate the response of $y_{t+h}$ to a shock of size $\delta$ in $\varepsilon_{1t}$. We assume that $x_t$ is predetermined with respect to $y_t$, an exclusion restriction that encompasses situations in which $x_t = \varepsilon_{1t}$ is an observed i.i.d.\ shock, as in the narrative approach to identification. In this case, $x_t$ is unconditionally independent of all other shocks driving the system between $t$ and $t+h$. When $x_t$ is not an observed shock $\varepsilon_{1t}$, the exclusion restriction together with the i.i.d.\ assumption on the structural shocks implies that $x_t$ is conditionally independent of all other shocks between $t$ and $t+h$, given control variables $\mathbf{z}_{t-1}$ that include the history of the system up to $t-1$.

We show how this conditional independence condition, combined with the assumption that $\varepsilon_{1t}$ enters $x_t$ additively, can be used to identify the IRF of $y_{t+h}$ with respect to $\varepsilon_{1t}$ using only the observables $(y_{t+h},x_t,\mathbf{z}_{t-1})$. In particular, the additive structure ensures that a $\delta$-perturbation in $\varepsilon_{1t}$ translates into a $\delta$-perturbation in $x_t$, holding fixed the control variables. The structural equation for the outcome variable is left unrestricted, permitting arbitrary nonlinearities. 

This identification strategy places our problem within the
semiparametric literature on inference for linear functionals of regression
functions (\citet{newey1994, chernozhukov2018, chernozhukov2022}). The key
estimation challenge is that the conditional mean function
$g_{0,h}(x,z) = E(y_{t+h}|x_t=x, \mathbf{z}_{t-1}=z)$ must be estimated
nonparametrically, which can induce bias in the plug-in
estimator. We address this concern by using the doubly robust moment condition of
\citet{chernozhukov2022}, augmented by
a density ratio reflecting the relative change in the conditional
distribution of $x_t$ given $\mathbf{z}_{t-1}$ when $x_t$ is shifted by
$\delta$. 

To handle the serial dependence of the data, we combine this moment condition
with the NLO (``neighbors-left-out'') cross-fitting approach of
\citet{semenova2023}, which ensures approximate independence between
training and evaluation sets. We derive the asymptotic distribution of the
resulting estimator and show that it is $\sqrt{T}$-consistent and
asymptotically normal, with the preliminary estimation of the nuisance
functions having no effect on the first-order asymptotic distribution.

Our paper is related to a recent and growing literature on semiparametric
and nonparametric inference for IRFs. The problem
of estimating the average effects of policy interventions nonparametrically
dates back at least to \citet{stock1989}, but our focus is on the causal
effect of structural macroeconomic shocks. Several recent papers have proposed nonparametric methods for
estimating nonlinear IRFs. For example, \citet{gourieroux2023} propose a nonparametric
local projection estimator for nonstructural IRFs identified from Gaussian shocks within
a Markov process framework. \citet{ballarin2024} proposes a sieve-based nonparametric estimator for
IRFs in models with nonlinearly transformed regressors, as in our
Example~\ref{Exsignsize} below, but does not cover the doubly robust
approach, the more general nonlinear settings we consider, or
inference. While our paper deals with shocks of finite magnitude $\delta$, \citet{kolesar2025}
focus on infinitesimally small shocks. They discuss the causal content of linear local projections when the data generating process is nonlinear and highlight challenges in applying doubly robust methods in the small samples typical of macroeconomics. Our paper
builds on this work as well as two recent studies that apply doubly
robust methods in macroeconometrics. \citet{ballinari2025} develop
semiparametric inference for IRFs using double/debiased machine learning
in a time series context, but focus on a binary treatment, so the
adjustment term in their orthogonal moment condition is based on the
propensity score rather than the density ratio we employ for continuous
treatments. \citet{huang2026} develop a two-step high-dimensional
nonparametric local projection estimator combining Neyman-orthogonal
pseudo-outcomes with cross-fitting. Because they focus on an IRF that
shifts the policy variable from a fixed baseline, their second step
requires a nonparametric regression and yields convergence rates slower
than $\sqrt{T}$; in contrast, our estimand averages over the
distribution of the shock and can be estimated at rate $\sqrt{T}$.
Finally, and independently, \citet{Nikolaishvili2026}, building on an earlier
version of our paper \citep{goncalves2024b}, proposes a closely related
doubly robust estimator for nonparametric local projections under
different assumptions and without allowing for covariates in the
conditioning set.

While our main focus is on identifying unconditional responses, our analysis 
also extends to average response functions conditional on a state variable 
$\Omega_t$. We provide identification conditions for this object and show how 
our semiparametric local projections estimator can be applied to each subsample 
$\{t: \Omega_t = \omega\}$ when $\Omega_t$ is discrete, as in state-dependent 
models.

The paper is organized as follows. Section~\ref{sec:framework} introduces
the structural model and three leading examples. Section~\ref{sec:IRFs}
defines the population IRFs of interest and contrasts our definition with alternative definitions used in the literature.  Sections~\ref{sec:ident} and~\ref{sec:estimation}
discuss identification and estimation, and inference, respectively.
Section~\ref{sec:conditionalIRF} briefly discusses how to extend the analysis to conditional
IRFs. The simulation results are presented in
Section~\ref{sec:simulations}. Section~\ref{sec:empirical} contains two empirical illustrations focusing on possible nonlinearities in the pass-through of gasoline price shocks to inflation and in the response of motor vehicle sales to real gasoline price shocks. We conclude in
Section~\ref{sec:conclusion}. The proofs are relegated to Appendices~\ref{sec:proof_prop} and \ref{sec:proofs}.

\section{Framework}\label{sec:framework}
Let $z_t=(x_t, y_t)'$ denote a vector of observed time series, where
$y_t$ is the outcome of interest and $x_t$ is predetermined with respect to $y_t$.
For example, $y_t$ could be real GDP and $x_t$ government spending. For simplicity, we assume that $y_t$ is
univariate, but extensions to multivariate outcomes could be easily accommodated. A general structural model for $z_t$ is a triangular system
of the form
\begin{align}
  x_t &= \phi(\mathbf{z}_{t-1})+\varepsilon_{1t}, \label{eq:S1} \\
  y_t &= \mu(x_t,\, \mathbf{z}_{t-1}, \varepsilon_{2t}),  \label{eq:S2}
\end{align}
where $\phi$ and $\mu$ are (potentially unknown) nonlinear functions,
and $\varepsilon_t = (\varepsilon_{1t}, \varepsilon_{2t})'$ is a vector of mutually independent structural
shocks. We assume $\varepsilon_t$ to be i.i.d.\ with mean zero and diagonal
covariance matrix $\Sigma = \mathrm{diag}(\sigma_1^2, \sigma_2^2)$. The
vector $\mathbf{z}_{t-1}= (z_{t-1}, z_{t-2}, \ldots, z_{t-p})'$ contains
lags of $z_t$; other variables can be included in $\mathbf{z}_{t-1}$ provided they are predetermined with respect to $\varepsilon_{1t}$. The exclusion of $y_t$ from equation (\ref{eq:S1}) is equivalent
to the assumption of block recursiveness in linear structural VAR
identification, where $x_t=\phi'\mathbf{z}_{t-1}+\varepsilon_{1t}$. An
important special case is $x_t=\varepsilon_{1t}$, as in the narrative
approach to identification.

Consistent with standard impulse response analysis in macroeconomics,
our goal is to trace the effect over time of surprise changes in the
variable $x_t$, as captured by a one-time change in the
structural shock $\varepsilon_{1t}$, rather than changes in $x_t$ itself.
This is the main reason why we assume in (\ref{eq:S1}) that $x_t$ and
$\varepsilon_{1t}$ are separable, as this is crucial for identifying the
impulse response function of $y_{t+h}$ with respect to $\varepsilon_{1t}$ from the observables $(y_{t+h},x_t,\mathbf{z}_{t-1})$.
A non-separable specification $x_t=\phi(\mathbf{z}_{t-1},\varepsilon_{1t})$
could be considered if instead we targeted the impulse
response function of $y_{t+h}$ with respect to $x_t$, as in \citet{kolesar2025} and \citet{huang2026}. 

Our framework accommodates a range of models of interest in applied work. One example is a model with nonlinearly transformed regressors. This
model allows for a sign nonlinearity in the responses with the magnitude of
the response depending on the sign of $x_t$ (e.g., $f(x_t)=\max \{x_t,0\}$%
) or a size nonlinearity with the magnitude of the response depending on the
size of $x_t$ (e.g., $f(x_t)=x^3_t$). Although the regression model is
linear in the parameters, the impulse response function is nonlinear,
requiring the use of nonstandard estimation methods (e.g., \citet{kilian2011}, \citet{goncalves2021}). Models with nonlinearly
transformed regressors have been used extensively in applied macroeconomics.
Examples include studies of the asymmetry in the responses to positive and
negative oil price shocks (e.g., \citet{herrera2015}) as well as
nonlinearities in the response of GDP to monetary policy shocks (e.g.,
\citet{tenreyro2016}, \citet{ascari2022}), financial shocks
(e.g., \citet{forni2024}) and fiscal shocks (e.g., \citet{benzeev2023}).
\begin{example}[Model with Nonlinear Regressors]
\label{Exsignsize}Let\vspace{-0.2cm}
\begin{align*}
x_{t}&=\phi'\mathbf{z}_{t-1}+\varepsilon _{1t}, \\ 
y_{t}&=\beta x_{t}+\rho' \mathbf{z}_{t-1}+cf\left( x_{t}\right) +\varepsilon _{2t},%
\end{align*}
where $f$ is a potentially unknown nonlinear function.
\end{example}

A second example is the state-dependent model examined in \citet{goncalves2024a} in which the response is allowed to differ between two observed
states (e.g., expansion and recession) based on a dummy variable indicator $%
S_{t-1}$. Models of this type have been used extensively to study the
magnitude of the fiscal multiplier, the effectiveness of monetary policy,
and the impact of uncertainty shocks in expansions and recessions (e.g.,
\citet{ramey2018}, \citet{cacciatore2021}, \citet{falck2021}).

\begin{example}[State-Dependent Model]
\label{ex:state} \label{Exstate}Let \vspace{-0.5cm}
\begin{align*}
x_{t}&=\varepsilon _{1t} \\ 
y_{t}&=\beta _{t-1}x_{t}+\gamma _{t-1}'\mathbf{z}_{t-1}+\varepsilon _{2t},%
\end{align*}
\noindent with $\beta_{t-1}=\beta_{E}S_{t-1}+\beta_{R}(1-S_{t-1})$
and $\gamma_{t-1}=\gamma_{E}S_{t-1}+\gamma_{R}(1-S_{t-1})$, where $S_{t-1}$
is a dummy variable indicating whether the economy is in expansion or in
recession. When $S_{t-1}$ depends on $\mathbf{z}_{t-1}$, the endogenous variables in the system, the IRF becomes nonlinear in $\varepsilon_{1t}$.
\end{example}

A final example is inspired by \citet{cloyne2020decomposing, cloyne2023statedependent} and \citet{caramp2026monetary} who consider a model in which the responses of $y_{t+h}$ to $\varepsilon_{1t}$ are allowed to be heterogeneous, with the heterogeneity being captured by an observable
variable, say, $r_t$. For instance, imagine a situation in which monetary
policy shocks, $\varepsilon_{1t}$, have a heterogeneous effect on GDP
growth, $y_t$, that depends on the level of government debt, $r_{t}$. The
level of debt, in turn, is a function of monetary policy in the previous
period ($x_{t-1}$) through its effect on interest rates. The interaction between the debt level and the shock of interest induces a nonlinearity that needs to be taken into account when estimating the IRF. Note that this specification differs from the
state-dependent model discussed earlier in that the model coefficients do
not depend on the state, but the impulse response does.

\begin{example}[Nonlinear Interaction Term of Shock with State]
\label{ex:cloyne} Let \label{ex:inter} \vspace{-0.5cm}
\begin{align*}
x_{t}&=\varepsilon _{1t} \\ 
y_t&=\beta_{21}x_t+ \beta_{23}r_{t}+\alpha_{21} x_tr_{t} +\gamma_{21} y_{t-1}
+\varepsilon_{2t} \\ 
r_t&=f(x_{t-1})+ \varepsilon_{3t},
\end{align*}
where $ \varepsilon _{1t}$ is mutually independent of $
\varepsilon_{2t}$ and $\varepsilon_{3t}$, $
\varepsilon_{2t}$ and $\varepsilon_{3t}$ are potentially correlated, $x_t$
is the shock of interest, and $r_t$ is an observable variable that may
change the effect of the policy shock. The form of the function $f$ is unknown and
can be linear or nonlinear.
\end{example}

\section{Population impulse response functions}\label{sec:IRFs}
Our main analysis focuses on unconditional versions of the IRF. Although the estimation and inference methods presented in Section~\ref{sec:estimation} are tailored to estimating unconditional IRFs, they can also be applied to conditional IRFs in state-dependent models such as in Example~\ref{ex:state}, where the conditioning set is discrete. Section~\ref{sec:conditionalIRF} discusses this application as well as the challenges one would face in estimating conditional IRFs in other examples such as Example~\ref{ex:inter}. 

Following the standard approach in macroeconomics, we care about the response of $y_{t+h}$ with respect to the structural shock $\varepsilon_{1t}$. As in the recent macroeconometrics literature, we adopt a potential outcomes framework (see e.g., \citet{goncalves2021} and \citet{goncalves2024a}). One implication of the structural model \eqref{eq:S1}--\eqref{eq:S2} is that $y_{t+h}$
can be written as $y_{t+h} = m_h(\varepsilon_{1t},\, U_{t+h})$, where $m_h$ is obtained by iterating \eqref{eq:S2} forward $h$
steps and substituting \eqref{eq:S1}, and
$U_{t+h} \equiv (\varepsilon_{2t},\, \varepsilon_{1,t+1},\, \varepsilon_{2,t+1},\,
\ldots,\, \varepsilon_{1,t+h},\, \varepsilon_{2,t+h},\, \mathbf{z}'_{t-1})'$
collects all remaining determinants of $y_{t+h}$. Since $\varepsilon_{1t}$ is
i.i.d.\ and independent of $\{\varepsilon_{2t}\}$, $\varepsilon_{1t}$ is independent of $U_{t+h}$, which we write as $\varepsilon_{1t} \perp U_{t+h}$.
The potential outcome
associated with fixing $\varepsilon_{1t} = e$ is
$y_{t+h}(e) = m_h(e, U_{t+h})$, where $e$ is any fixed value in the support of $\varepsilon_{1t}$. The observed outcome satisfies 
$y_{t+h} = y_{t+h}(\varepsilon_{1t})$, implying that it is the value that we observe when $e$ takes the value $\varepsilon _{1t}$ that generated the observed data. The fact that $\varepsilon _{1t}$ and $U_{t+h}$ are mutually independent implies that the potential outcomes are
independent of $\varepsilon _{1t}$. 

To define the response function of  $y_{t+h}$ with respect to $\varepsilon_{1t}$, we compare the (observed)
baseline value $y_{t+h}(\varepsilon_{1t})$ with the
counterfactual (unobserved) value of $y$ at $t+h$ that would have been observed if $
\varepsilon_{1t}$ had been subject to a shock of size $\delta$, denoted $%
y_{t+h}(\varepsilon_{1t}+\delta)$ (e.g., \citet{potter2000}). In
particular, following \citet{goncalves2024a}, we adopt the following
definition:
\begin{definition}\label{def:CAR}The average response function of $y_{t+h}$ to a shock of size $\delta$ in $%
\varepsilon_{1t}$ is defined as $\mathrm{ARF}_h(\delta)\equiv E\left( y_{t+h}\left( \varepsilon
_{1t}+\delta \right) -y_{t+h}\left( \varepsilon _{1t}\right) \right)$.
\end{definition}

$\mathrm{ARF}_h(\delta)$ corresponds to the unconditional average response used in
\citet{goncalves2021}. A conditional average response function can also be defined as in Definition~\ref{def:CAR_1} in Section~\ref{sec:conditionalIRF}, following \citet{goncalves2024b}.\footnote{Alternatively, one could also define versions of these IRFs where $\delta\to 0$, as discussed in \citet{goncalves2024b}. In this
paper, we focus on responses to a shock of finite magnitude $\delta$, as is
common in applied work (e.g., \citet{ramey2018}).}

Definition \ref{def:CAR} is not the only possible definition of an unconditional IRF. Other studies such as \citet{koop1996}, \cite{rambachan2021common}, and \citet{huang2026}, for example, have instead compared the two potential
outcomes $y_{t+h}(e^\prime)$ and $y_{t+h}\left(e\right)$, often setting $%
e^{\prime}=\delta$ and $e=0$, which yields the alternative definition:

\begin{definition}
\label{def:RF}The response function of 
$y_{t+h}$ to a shock of size $\delta$ in $\varepsilon_{1t}=e$ is defined as
$\mathrm{ARF}^*_{h}( \delta, e )=E(y_{t+h}(e+\delta )-y_{t+h}(e))$.
\end{definition}

Whereas Definition \ref{def:RF} has been used widely in the literature,
Definition \ref{def:CAR} is more recent (\citet{goncalves2021, goncalves2024b, goncalves2024a}). 
These two definitions are equivalent when the potential outcome is linear
in $e$ for all horizons, as would be the case for a linear model or in special cases of Example \ref{Exstate} and
Example \ref{ex:cloyne} when the conditioning sets ($S_{t-1}$ and $r_t$,
respectively) are exogenous. However, in general, the two definitions
differ.

\begin{figure}[H]
\centering
\caption{Alternative IRF definitions} \label{fig:irf_def}
\subfloat[$f(x)=\max(x_t,0)$]{\includegraphics[scale=0.35]{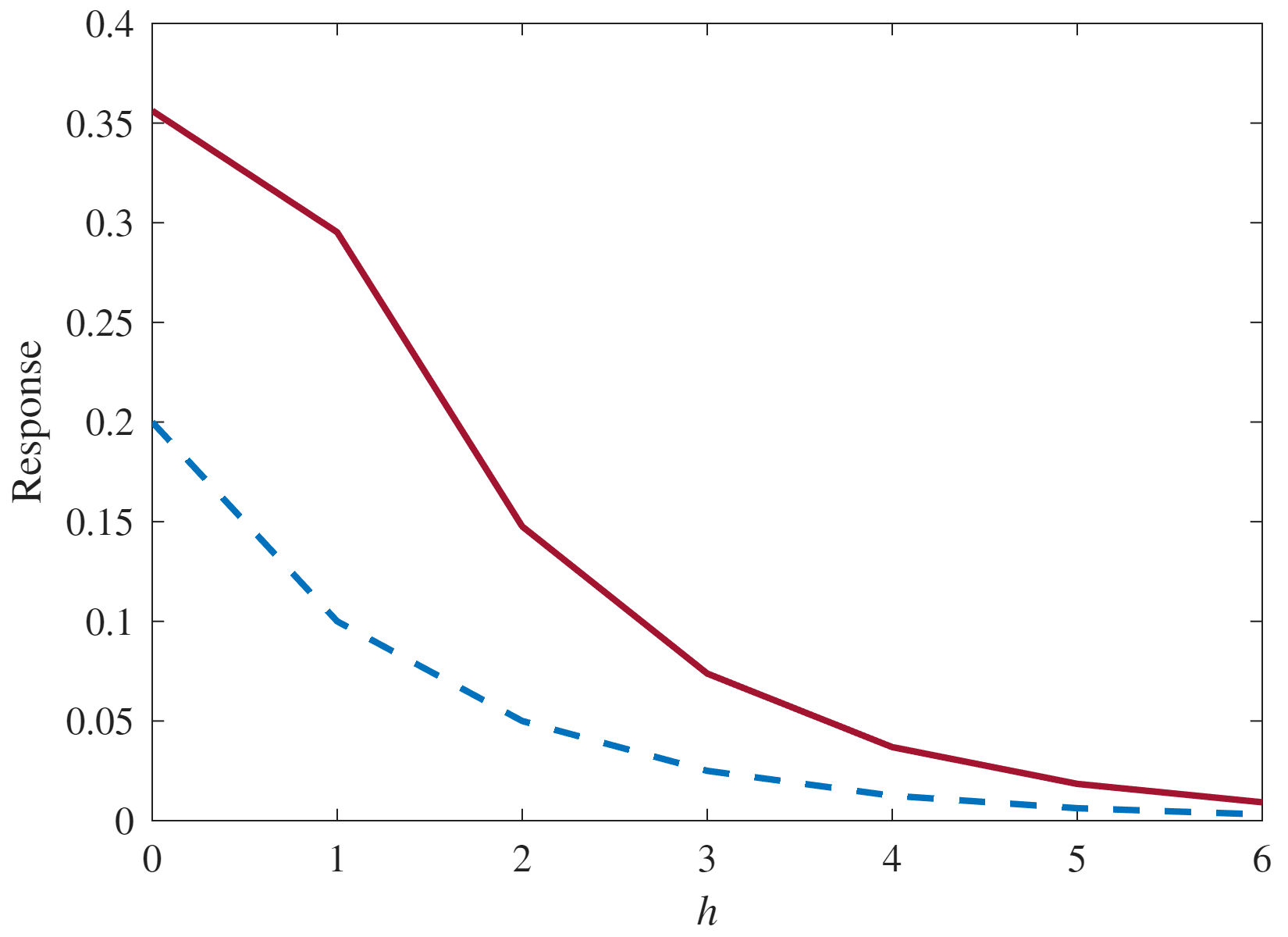}}
\subfloat[$f(x)=x^3_t$]{\includegraphics[scale=0.35]{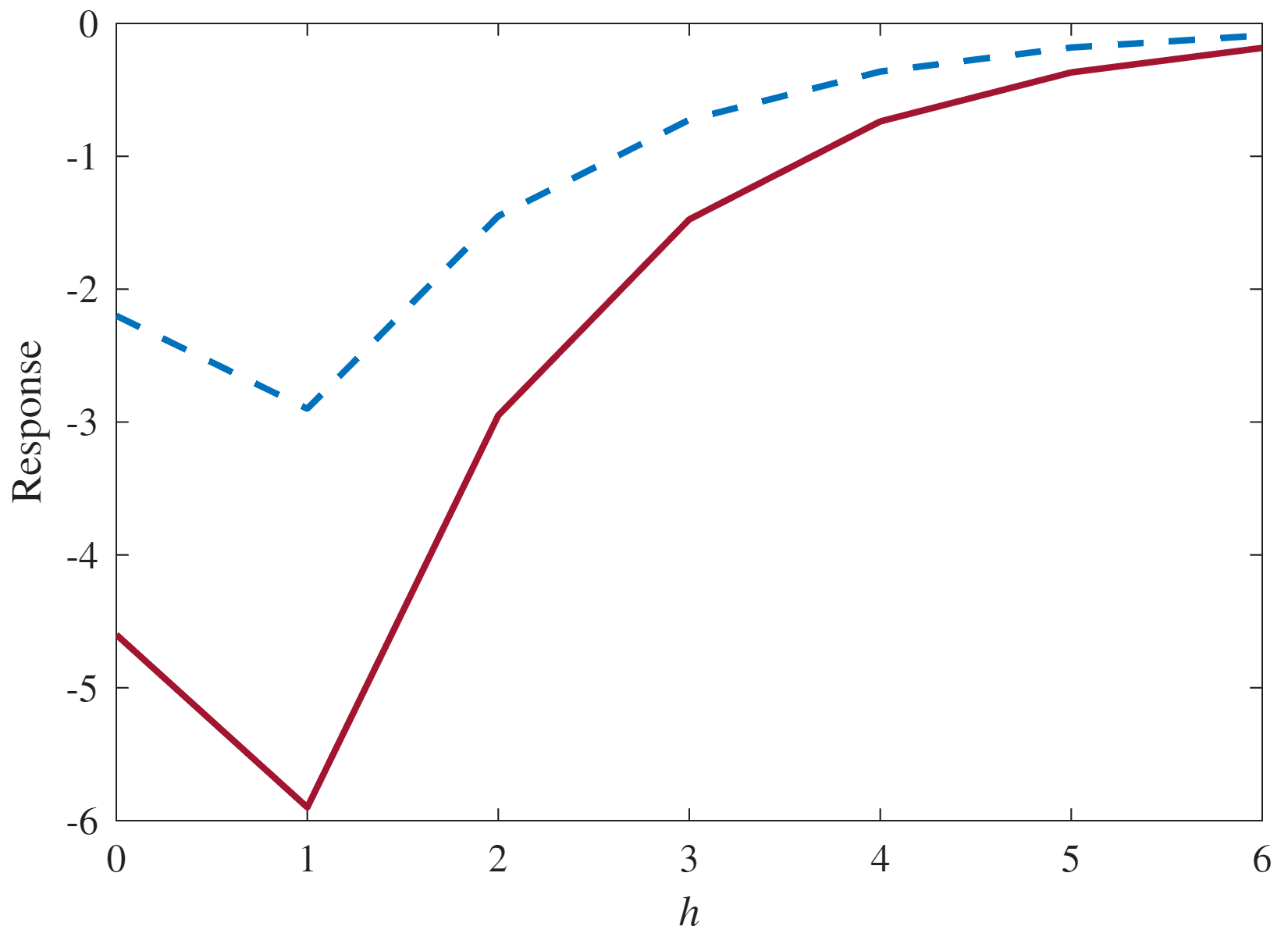}}
\\
\footnotesize{Notes: The solid red line and the dashed blue line correspond to the two definitions of the average response function $\mathrm{ARF}$ and $\mathrm{ARF}^*$ respectively, to a shock of size $\delta=2$.}
\end{figure}

Figure \ref{fig:irf_def} illustrates these differences by example. Consider
the nonlinear DGP: \vspace{-0.5cm}
\begin{eqnarray*}  \label{DGP12}
x_{t} &=&\varepsilon _{1t} \\
y_{t} &=&0.5y_{t-1}+0.5x_{t}+0.3x_{t-1}-0.4f(x_t)-0.3f(x_{t-1})+\varepsilon_{2t},
\end{eqnarray*}
where $\varepsilon_{1t}$ and $\varepsilon_{2t}$ are independent and have a
standard normal distribution. For illustrative purposes, let the magnitude of the shock be $\delta=2$ and
the functional forms $f(x_t)=\max(x_t,0)$ and $f(x_t)=x^3_t$,
respectively. The solid red line in Figure \ref{fig:irf_def} denotes the $\mathrm{ARF}$ obtained as
the average over the $y_{t}$ obtained for different realizations of $%
\varepsilon_{1t}$, whereas the dashed line denotes the value of $%
\mathrm{ARF}^{*}$ obtained by setting $\varepsilon_{1t}=e=0$. It is readily apparent
that in this example, the two definitions of the IRF imply
quite different measures of the conditional expectation of $y_{t}$ in the
absence of a perturbation.

Which approach is the more natural one? The only difference between these
two approaches is the treatment of the impact period. The baseline in
computing any impulse response is the conditional expectation of $y_{t}$ in
the absence of a perturbation $\delta$ (e.g., \citet{potter2000}, p.~1430). In other words, the baseline is what we would have expected $y_{t}$ to be in the absence
of a perturbation, possibly conditional on the history of the data. For example, if $\mathcal{F}^{t-1}$ denotes the information available up to time $t-1$, a natural baseline is the conditional expectation $E_{t-1}\left(y_t\right) \equiv E\left(y_t \mid \mathcal{F}^{t-1}\right)$. In this example, $
E_{t-1}(y_{t}) = 0.5y_{t-1}+0.3x_{t-1}-0.4E_{t-1}(f(x_t))-0.3f(x_{t-1}),$
where the predetermined values are known and we imposed $E_{t-1}(\varepsilon
_{1t})=E_{t-1}(\varepsilon
_{2t})=0.$ This expectation can only be evaluated by integrating $f(x_t)$ over all possible realizations of $x_t$, as in Definition 1. In contrast, Definition 2 evaluates this expression as $f(E(x_t))=f(0)$. By Jensen's inequality, this will not yield the desired baseline for computing the population impulse response to a shock of magnitude $\delta$ because $
E(f(x_t))$ is not $f(E(x_t))$. Thus, we work with Definition 1
throughout this paper.

\section{Identification}\label{sec:ident}
We discuss the identification of $\theta_{0,h}\equiv \mathrm{ARF}_h(\delta)$ in 
Definition~\ref{def:CAR}. Motivated by Section 6 of \citet{kolesar2025}, we first discuss identification based on a 
regression-based approach, where $\theta_{0,h}$ is identified using the 
conditional expectation function $g_{0,h}(x,z)\equiv E(y_{t+h}|x_t=x,
\mathbf{z}_{t-1}=z)$, and then show how to obtain identification using a 
doubly robust approach. We use the subscript ``0'' to indicate true 
parameters and functions throughout.

Starting with the regression-based approach, note that the independence 
between $\varepsilon_{1t}$ and $U_{t+h}$ (which holds by \eqref{eq:S1} 
and \eqref{eq:S2} under the i.i.d.\ assumption on $\varepsilon_t$ and the mutually independent shocks assumption) allows us to identify $\theta_{0,h}$ as 
$\theta_{0,h} = E\!\left[g_{\varepsilon,0,h}(\varepsilon_{1t}+\delta) - 
g_{\varepsilon,0,h}(\varepsilon_{1t})\right]$,
where $g_{\varepsilon,0,h}(e)\equiv E(y_{t+h}|\varepsilon_{1t}=e)$ (e.g., \citet{kolesar2025}). This representation is useful for estimation when 
$\varepsilon_{1t}$ is an observed shock, as in the narrative approach to 
identification. However, it does not directly apply when $x_t$ is an observed variable and $\varepsilon_{1t}$ is its underlying (unobserved) structural shock. In what follows, we show how the additive structure of 
\eqref{eq:S1} can be exploited to identify the average response function of $y_{t+h}$ to an impulse in $\varepsilon_{1t}$ (i.e., $\mathrm{ARF}_h(\delta)$ in Definition~\ref{def:CAR}) in this more general context: since $x_t=\phi(\mathbf{z}_{t-1})+\varepsilon_{1t}$, a 
$\delta$-shift in $\varepsilon_{1t}$ holding $\mathbf{z}_{t-1}$ fixed is 
identical to a $\delta$-shift in $x_t$ holding $\mathbf{z}_{t-1}$ fixed, 
yielding an identification result for $\theta_{0,h}$ in terms of the 
observables $(y_{t+h},x_t,\mathbf{z}_{t-1})$ only.

\begin{proposition}\label{Prop:identification}
Suppose that $z_t=(x_t, y_t)'$ satisfies \eqref{eq:S1} and \eqref{eq:S2} 
where $\varepsilon_t=(\varepsilon_{1t},\varepsilon_{2t})'$ contains mutually independent shocks and is
i.i.d.$(0,\Sigma)$ with $\Sigma=\mathrm{diag}(\sigma^2_1,\sigma^2_2)$. 
It follows that
\begin{equation}
  \theta_{0,h}\equiv\mathrm{ARF}_h(\delta) = E\!\left[g_{0,h}(x_t+\delta,\, 
  \mathbf{z}_{t-1}) - g_{0,h}(x_t,\, \mathbf{z}_{t-1})\right],
  \label{eq:ident} 
\end{equation}
where $g_{0,h}(x, z) \equiv E(y_{t+h} \mid x_t = x,\, \mathbf{z}_{t-1} = z)$.
\end{proposition}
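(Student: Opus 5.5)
The plan is to write the potential outcome in terms of $(x_t,\mathbf{z}_{t-1})$ and the future shocks only, and then use the independence of those shocks from $(x_t,\mathbf{z}_{t-1})$ to turn the unobserved counterfactual mean into a conditional mean evaluated at $x_t+\delta$.

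\textbf{Step 1: structural representation.} Iterate \eqref{eq:S2} forward $h$ steps, substituting \eqref{eq:S1} at each date $t+1,\dots,t+h$. This gives $y_{t+h}=\tilde m_h(x_t,\mathbf{z}_{t-1},V_{t+h})$ for some function $\tilde m_h$, where $V_{t+h}\equiv(\varepsilon_{2t},\varepsilon_{1,t+1},\varepsilon_{2,t+1},\dots,\varepsilon_{1,t+h},\varepsilon_{2,t+h})$. The point is that, once $x_t$ and $\mathbf{z}_{t-1}$ are given, $\varepsilon_{1t}$ enters only through $x_t$. That holds because $\varepsilon_{1t}$ appears in the system only via \eqref{eq:S1}, and $x_t$ then feeds into $y_t$ and into all later $\mathbf{z}_{t+j}$.

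By \eqref{eq:S1}, fixing $\varepsilon_{1t}=e$ with $\mathbf{z}_{t-1}$ held fixed sets $x_t=\phi(\mathbf{z}_{t-1})+e$. Hence $y_{t+h}(e)=\tilde m_h(\phi(\mathbf{z}_{t-1})+e,\mathbf{z}_{t-1},V_{t+h})$. In particular,
\[
y_{t+h}(\varepsilon_{1t}+\delta)=\tilde m_h(x_t+\delta,\mathbf{z}_{t-1},V_{t+h}),\qquad y_{t+h}(\varepsilon_{1t})=y_{t+h}.
\]

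\textbf{Step 2: independence.} Show that $V_{t+h}$ is independent of $(x_t,\mathbf{z}_{t-1})$. The pair $(x_t,\mathbf{z}_{t-1})$ is measurable with respect to $(\varepsilon_{1t},\{\varepsilon_s\}_{s\le t-1})$, together with initial conditions that are treated as part of the past. The vector $V_{t+h}$ consists of $\varepsilon_{2t}$ and shocks dated after $t$. Because $\{\varepsilon_t\}$ is i.i.d. and $\varepsilon_{1t}$ is independent of $\varepsilon_{2t}$, the two blocks are independent.

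\textbf{Step 3: conditional means.} By the independence in Step 2, for any fixed $(x,z)$,
\[
E(y_{t+h}\mid x_t=x,\mathbf{z}_{t-1}=z)=\int \tilde m_h(x,z,v)\,dF_V(v),
\]
so $g_{0,h}(x,z)=\int \tilde m_h(x,z,v)\,dF_V(v)$. Applying the same independence again gives
\[
E\!\left[\tilde m_h(x_t+\delta,\mathbf{z}_{t-1},V_{t+h})\mid x_t,\mathbf{z}_{t-1}\right]=g_{0,h}(x_t+\delta,\mathbf{z}_{t-1}).
\]
Taking iterated expectations of $y_{t+h}(\varepsilon_{1t}+\delta)-y_{t+h}(\varepsilon_{1t})$ then yields \eqref{eq:ident}.

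\textbf{Main obstacle.} The delicate point is the evaluation of $g_{0,h}$ at $x_t+\delta$. It must be justified that $g_{0,h}$ is defined there, so $(x_t+\delta,\mathbf{z}_{t-1})$ should lie in the support. The clean way is to use the structural version $\int \tilde m_h(x,z,v)\,dF_V(v)$ of $g_{0,h}$, which is defined for all $(x,z)$; it agrees with the conditional mean on the support and remains meaningful under a support or overlap condition on $\varepsilon_{1t}$. This needs to be stated explicitly, along with an integrability assumption ensuring the expectations exist.
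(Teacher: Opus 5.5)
Your proposal is correct and follows essentially the same route as the paper: write $y_{t+h}$ as a function of $x_t$ and the remaining determinants, use the additivity in \eqref{eq:S1} to turn a $\delta$-shift in $\varepsilon_{1t}$ into a $\delta$-shift in $x_t$ with $\mathbf{z}_{t-1}$ held fixed, use independence to express $g_{0,h}$ as an integral over the remaining shocks, and finish with the law of iterated expectations. The differences are minor. You use the joint independence $V_{t+h}\perp(x_t,\mathbf{z}_{t-1})$, which implies the conditional independence $x_t\perp U_{t+h}\mid\mathbf{z}_{t-1}$ that the paper invokes. Your caveat about evaluating $g_{0,h}$ at $x_t+\delta$ matches the support discussion the paper gives right after the proposition.
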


The representation of $\mathrm{ARF}_h(\delta)$ given in \eqref{eq:ident} 
is identified from observables because $g_{0,h}(x,z)$ is the conditional 
mean of $y_{t+h}$ given $(x_t,\mathbf{z}_{t-1})$, a functional of the 
joint distribution of $(y_{t+h}, x_t, \mathbf{z}_{t-1})$, and the outer 
expectation is taken over the marginal distribution of $(x_t, 
\mathbf{z}_{t-1})$, both of which are observable. No knowledge of $\phi$ 
or $\mu$ is required.  Identification requires that $g_{0,h}$ be defined on the support of $(x_t+\delta,\mathbf{z}_{t-1})$ as well as on the support of $(x_t,\mathbf{z}_{t-1})$. When the support of $\varepsilon_{1t}$ (and hence of $x_t$ given $\mathbf{z}_{t-1}$) is bounded, the shifted support may fall outside the original one, in which case $g_{0,h}(x_t+\delta,\mathbf{z}_{t-1})$ is not identified for some $(x_t,\mathbf{z}_{t-1})$ pairs. For this reason, we do not impose bounded support conditions. 
\begin{remark}
The additive structure in \eqref{eq:S1} is important for identifying 
$\theta_{0,h}\equiv \mathrm{ARF}_h(\delta)$ as stated in Definition~\ref{def:CAR}. It implies that 
perturbing $\varepsilon_{1t}$ by $\delta$ is equivalent to perturbing 
$x_t$ by $\delta$ holding $\mathbf{z}_{t-1}$ fixed, so that $\theta_{0,h}$ 
can equivalently be written as
\begin{equation*}
\theta_{0,h}= E\!\left[m_h(x_t+\delta,\, U_{t+h}) - m_h(x_t,\, 
U_{t+h})\right]= E\!\left[g_{0,h}(x_t+\delta,\, \mathbf{z}_{t-1}) - 
g_{0,h}(x_t,\, \mathbf{z}_{t-1})\right],
\end{equation*} 
where $m_h$ is a reparametrization of the structural function introduced in Section~3, now expressed 
as a function of $x_t$. Without the additive structure, the first equality 
fails. Nevertheless, the estimand $E\!\left[g_{0,h}(x_t+\delta,\, \mathbf{z}_{t-1}) - 
g_{0,h}(x_t,\, \mathbf{z}_{t-1})\right]$ retains a causal interpretation 
as the IRF of $y_{t+h}$ to a $\delta$-perturbation of $x_t$ holding 
$\mathbf{z}_{t-1}$ fixed.
\end{remark}
\begin{remark}
If contemporaneous feedback from $y_t$ to $x_t$ were allowed, Definition~\ref{def:CAR} would still define the causal effect of a shock on the relevant potential outcome. However, the identification argument in Proposition~\ref{Prop:identification} would no longer apply directly: it would require an additional instrument or identification strategy to isolate the exogenous component of $x_t$. We reserve this question for future research.
\end{remark}

Proposition~\ref{Prop:identification} leads to a moment 
condition of the form
\begin{equation}\label{eq:naive_moment}
  E\left[g_{h}(x_t+\delta,\mathbf{z}_{t-1})-g_{h}(x_t,\mathbf{z}_{t-1})
  -\theta_h\right]=0,
\end{equation}
which identifies $\theta_{0,h}$ when $g_h=g_{0,h}$. A natural approach is to
replace $g_{0,h}$ by a first-step estimator $\hat{g}_h$, yielding 
$\widehat{\mathrm{ARF}}_h(\delta)=T^{-1}\sum_{t=1}^{T}\left[\hat{g}_h(x_t+\delta,
\mathbf{z}_{t-1})-\hat{g}_h(x_t,\mathbf{z}_{t-1})\right]$. When 
$\hat{g}_h$ is estimated by machine learning or nonparametric methods, 
this regression-based estimator can suffer from 
first-order bias, and inference requires adjusting for estimation 
uncertainty in $g_{0,h}$. This motivates using the double/debiased machine 
learning approach of \citet{chernozhukov2018, chernozhukov2022, chernozhukov2024covariate} to augment 
the moment condition \eqref{eq:naive_moment} using Neyman orthogonality. When combined with a form of cross-fitting that handles time series dependence, inference based on this estimator can proceed as if the nuisance functions were fully observed. 

To describe the doubly robust approach, let $f_{0,x|z}(x|z)$ denote the 
conditional density of $x_t$ given $\mathbf{z}_{t-1}$. It can be easily shown that for any function $g_h$,
\begin{equation}\label{eq:Neyman}
E\left[g_h(x_t+\delta,\mathbf{z}_{t-1})-g_h(x_t,\mathbf{z}_{t-1})\right]
=E\left[\alpha_0(x_t,\mathbf{z}_{t-1})g_h(x_t,\mathbf{z}_{t-1})\right],
\end{equation}
where $\alpha_0(x,z)\equiv (f_{0,x|z}(x-\delta|z)-f_{0,x|z}(x|z))/
f_{0,x|z}(x|z)$ is the Riesz representer (we omit the dependence on 
$\delta$ throughout). This Riesz representer is a density ratio that captures the relative change in the conditional 
density of $x_t$ given $\mathbf{z}_{t-1}$ when $x_t$ is shifted by $\delta$ (see Section 6 of \citet{kolesar2025}, who report the form of the Riesz representer when $\delta\to 0$).

\begin{proposition}\label{Prop:DR}
Suppose the conditions of Proposition~\ref{Prop:identification} hold. 
Then $\theta_{0,h}$ solves
\begin{equation}\label{eq:moment_DR}
E\left[g_{0,h}(x_t+\delta,\mathbf{z}_{t-1})-g_{0,h}(x_t,\mathbf{z}_{t-1})
-\theta_h+\alpha_0(x_t,\mathbf{z}_{t-1})(y_{t+h}
-g_{0,h}(x_t,\mathbf{z}_{t-1}))\right]=0.
\end{equation}
\end{proposition}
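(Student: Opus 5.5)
The plan is to split the moment in \eqref{eq:moment_DR} into two pieces. The first is the plug-in piece $E[g_{0,h}(x_t+\delta,\mathbf{z}_{t-1})-g_{0,h}(x_t,\mathbf{z}_{t-1})]-\theta_h$. The second is the correction term $E[\alpha_0(x_t,\mathbf{z}_{t-1})(y_{t+h}-g_{0,h}(x_t,\mathbf{z}_{t-1}))]$.

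By Proposition~\ref{Prop:identification}, the first piece vanishes exactly at $\theta_h=\theta_{0,h}$. It therefore suffices to show that the correction term has mean zero, so that $\theta_{0,h}$ solves \eqref{eq:moment_DR}. Because the moment is affine in $\theta_h$ with coefficient $-1$, it then also follows that $\theta_{0,h}$ is the unique solution.

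For the correction term I will use the law of iterated expectations, conditioning on $(x_t,\mathbf{z}_{t-1})$. Since $\alpha_0(x_t,\mathbf{z}_{t-1})$ is measurable with respect to this conditioning set, it can be pulled out of the inner expectation. This gives
\[
E\big[\alpha_0(x_t,\mathbf{z}_{t-1})\,E(y_{t+h}-g_{0,h}(x_t,\mathbf{z}_{t-1})\mid x_t,\mathbf{z}_{t-1})\big].
\]
The inner conditional expectation is identically zero by the definition of $g_{0,h}$ as the conditional mean of $y_{t+h}$ given $(x_t,\mathbf{z}_{t-1})$. Hence the correction term is zero.

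The step needing care is integrability. We need $E|\alpha_0(x_t,\mathbf{z}_{t-1})(y_{t+h}-g_{0,h})|<\infty$ for iterated expectations to be applied legitimately. We also need $f_{0,x|z}(x_t|\mathbf{z}_{t-1})>0$ almost surely so that $\alpha_0$ is well defined. Positivity holds almost surely on the support of $x_t$ given $\mathbf{z}_{t-1}$. I would state the needed moment condition explicitly, for instance $E[\alpha_0^2]<\infty$ together with $E(y_{t+h}^2)<\infty$, and then apply Cauchy--Schwarz. The same conditions also cover the existence of the expectations in Proposition~\ref{Prop:identification}.

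As a consistency check, and to connect with \eqref{eq:Neyman}, I would also verify the Riesz representation. Write $x_t=\phi(\mathbf{z}_{t-1})+\varepsilon_{1t}$, so conditional on $\mathbf{z}_{t-1}=z$ the variable $x_t$ has density $f_{0,x|z}(\cdot|z)$. Then for any integrable $g$,
\[
\int g(x+\delta,z)f_{0,x|z}(x|z)\,dx=\int g(u,z)f_{0,x|z}(u-\delta|z)\,du,
\]
by the change of variables $u=x+\delta$. This needs the supports to be compatible, which is why unbounded support is assumed.

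This identity gives \eqref{eq:Neyman}, and with it the usual double-robustness interpretation. It is not needed for the mean-zero argument itself. The main obstacle is thus only the regularity conditions; the algebra is immediate.
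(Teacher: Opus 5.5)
Your proof is correct and follows essentially the paper's argument. The plug-in part equals $\theta_{0,h}-\theta_h$ by Proposition~\ref{Prop:identification}, and the correction term $E[\alpha_0(x_t,\mathbf{z}_{t-1})(y_{t+h}-g_{0,h}(x_t,\mathbf{z}_{t-1}))]$ vanishes by the law of iterated expectations, since $E[y_{t+h}-g_{0,h}(x_t,\mathbf{z}_{t-1})\mid x_t,\mathbf{z}_{t-1}]=0$. The paper routes the plug-in part through \eqref{eq:Neyman}, which, as you note, is not needed, and your integrability and support remarks make explicit conditions the paper leaves implicit.
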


As it turns out, Equation \eqref{eq:moment_DR} yields a doubly robust moment equation for $\theta_{0,h}$. Defining $$\psi(y_{t+h},x_t,\mathbf{z}_{t-1},g_h,\alpha,\theta_{h})=g_{h}(x_t+\delta,\mathbf{z}_{t-1})-g_{h}(x_t,\mathbf{z}_{t-1})
-\theta_{h}+\alpha(x_t,\mathbf{z}_{t-1})(y_{t+h}
-g_{h}(x_t,\mathbf{z}_{t-1})),$$ we can show that $E[\psi(y_{t+h},x_t,\mathbf{z}_{t-1},g_h,\alpha,\theta_{0,h})]=0$ holds for
any $g_h$ when $\alpha=\alpha_0$ and for any $\alpha$ 
when $g_h=g_{0,h}$. This follows by an application of Theorem 5 of \cite{chernozhukov2022}. Hence, this moment condition yields an estimator that is insensitive to 
misspecification of $g_h$ provided $\alpha=\alpha_0$, and insensitive to 
misspecification of $\alpha$ provided $g_h=g_{0,h}$.
\section{Estimation and inference}\label{sec:estimation}

We estimate $\theta_{0, h}$ by combining the doubly robust moment condition in \eqref{eq:moment_DR} with cross-fitting. In the standard cross-fitting approach, the data are partitioned into non-overlapping blocks. For each block, the nuisance functions are estimated on the complement of that block and evaluated on the heldout observations. These out-of-sample nuisance estimates are then used to construct the moment condition for estimating $\theta_{0, h}$; see, for example, \citet{chernozhukov2022}. A crucial assumption that justifies this approach is random sampling, which implies that the blocks are mutually independent (and the data within blocks are i.i.d.). 

In a time series context, the presence of serial dependence in $z_{t}=(x_t,y_t)'$ violates this assumption even when $x_t$ is i.i.d., creating dependence among the blocks. Hence, we follow \citet{semenova2023} and rely on NLO (``neighbors-left-out'') cross-fitting.\footnote{This approach leaves out not only the target block but also its immediate neighbors when estimating the nuisance functions. The resulting training and evaluation sets are then approximately independent, with the approximation error controlled by the speed of mixing of the underlying time series. Recent applications of the NLO cross-fitting approach to inference on impulse response functions in time series include \citet{ballinari2025} and \citet{huang2026}. Because their estimands are different than ours, their assumptions and asymptotic results also differ from ours.} We partition the index set 
$\{1,\ldots,T\}$ into $K\ge 4$ non-overlapping blocks $I_1,\ldots,I_K$ 
of contiguous time indices, each of size $T_\ell\equiv T/K$: 
$\{1,\ldots,T\} = I_1\cup\cdots\cup I_K$.\footnote{For simplicity we assume that $T$ is divisible by $K$.} We keep $K$ fixed as $T\to \infty$ when deriving the asymptotic theory below. For each $\ell\in\{1,\ldots,K\}$, 
let $\mathcal{N}(\ell)$ denote the set containing $\ell$ and its immediate 
neighbors in $\{1,\ldots,K\}$, i.e.\ 
$\mathcal{N}(\ell)=\{\ell-1,\ell,\ell+1\}\cap\{1,\ldots,K\}$, and define 
the quasi-complement of $I_\ell$ as $I^{\mathrm{qc}}_\ell = \bigcup_{j\notin\mathcal{N}(\ell)} I_j,$ so that $I^{\mathrm{qc}}_\ell$ is obtained from the full complement 
$I_{-\ell}=\bigcup_{j\ne\ell}I_j$ by additionally removing the two blocks 
adjacent to $I_\ell$. The key feature of NLO cross-fitting is that 
$I_\ell$ and $I^{\mathrm{qc}}_\ell$ are separated by at least $T_\ell$ 
time periods. Since $K$ is fixed, as $T\to\infty$ we have $T_\ell\to\infty$, 
so that under mixing-type conditions $I_\ell$ is approximately independent 
of $I^{\mathrm{qc}}_\ell$.

Given horizon $h$, the NLO cross-fitting estimator of $\theta_{0,h}$, 
which we refer to as DR-NLO, is computed as follows. For notational 
simplicity, we drop the horizon index $h$ in the nuisance functions 
and estimator that correspond to block $\ell$, with the understanding that all objects depend on $h$. 
For each $\ell=1,\ldots,K$:
\begin{enumerate}
    \item Estimate $\hat{g}_\ell$ and $\hat{\alpha}_\ell$ using some 
    nonparametric or machine learning procedure on the quasi-complement 
    $I^{\mathrm{qc}}_\ell$.
    \item Set the average of the moment condition \eqref{eq:moment_DR} 
    over $I_\ell$ to zero to obtain
    \begin{equation*}
    \hat{\theta}_{\ell} = \frac{1}{T_\ell}\sum_{t\in I_\ell}\left[
    \hat{g}_\ell(x_t+\delta,\mathbf{z}_{t-1})-\hat{g}_\ell(x_t,
    \mathbf{z}_{t-1})+\hat{\alpha}_\ell(x_t,\mathbf{z}_{t-1})(y_{t+h}-
    \hat{g}_\ell(x_t,\mathbf{z}_{t-1}))\right].
    \end{equation*}
    \item Compute the estimator of $\theta_{0,h}$ as
    \begin{equation*}\label{eq:estimator}
    \hat{\theta}_h= \frac{1}{K}\sum_{\ell=1}^{K}\hat{\theta}_{\ell} 
    = \frac{1}{T}\sum_{\ell=1}^{K}\sum_{t\in I_\ell}\left[
    \hat{g}_\ell(x_t+\delta,\mathbf{z}_{t-1})-\hat{g}_\ell(x_t,
    \mathbf{z}_{t-1})+\hat{\alpha}_\ell(x_t,\mathbf{z}_{t-1})(y_{t+h}-
    \hat{g}_\ell(x_t,\mathbf{z}_{t-1}))\right].
    \end{equation*}
\end{enumerate}

We next state the regularity conditions used to derive the asymptotic distribution of $\hat{\theta}_h$. Following \citet{semenova2023} and \citet{huang2026}, we impose 
geometric $\beta$-mixing on $\{z_t\}$. $\beta$-mixing is 
strictly stronger than the more standard strong mixing assumption, but allows us to use the Strassen coupling result that underlies the NLO 
theory of \citet{semenova2023}. We define the $\beta$-mixing 
coefficients as $\beta(j)\equiv \sup_{t}\,\beta\!\left(\sigma(z_s,
s\le t),\,\sigma(z_s,s\ge t+j)\right)$, where
$\beta(\mathcal{A},\mathcal{B}) = E[\sup_{B\in\mathcal{B}}\left|P(B\mid
\mathcal{A})-P(B)\right|]$ for any $\sigma$-algebras
$\mathcal{A}$ and $\mathcal{B}$. The function $\psi(y_{t+h},x_t,\mathbf{z}_{t-1},g_0,\alpha_0,\theta_{0,h})$ in Assumption~\ref{Assumption:boundedness} below is the doubly robust moment function defined in \eqref{eq:moment_DR} (see Proposition~\ref{Prop:DR}). We let $\mathcal{X}$ and $\mathcal{Z}$ denote the supports of $x_t$ and $\mathbf{z}_{t-1}$.

\begin{assumption}\label{Assumption:mixing}
$\{z_{t}=(x_t, y_t)'\}$ is stationary and geometrically $\beta$-mixing, 
i.e., $\beta(j) \le C\exp(-c_\beta j)$ for some constants $C\ge 0$ and 
$c_\beta>0$, and for all $j\ge 1$.
\end{assumption}

\begin{assumption}\label{Assumption:boundedness}
For some finite constants $\bar{\alpha}$, $\bar{\sigma}_q$, and for some 
$q>2$:
\begin{itemize}
    \item[(i)] $\sup_{(x,z)\in\mathcal{X}\times\mathcal{Z}} |\alpha_0(x,z)| 
  < \bar{\alpha}$.
  
  \item[(ii)] $\sup_{(x,z)\in\mathcal{X}\times\mathcal{Z}} 
  E[|e_{t+h}|^q|x_t=x,\mathbf{z}_{t-1}=z] < \bar{\sigma}_q^q$, where 
  $e_{t+h}\equiv y_{t+h}-g_{0}(x_t,\mathbf{z}_{t-1})$.
  \item[(iii)] $E[|\psi(y_{t+h},x_t,\mathbf{z}_{t-1},g_{0},\alpha_0,\theta_{0,h})|^{2+
  \epsilon}]<\infty$ for some $\epsilon>0$.
\end{itemize}
\end{assumption}

Assumption~\ref{Assumption:boundedness}(i) requires the Riesz representer 
$\alpha_0(x,z)$ to be uniformly bounded over $\mathcal{X}\times\mathcal{Z}$, the support of 
$(x_t,\mathbf{z}_{t-1})$. This rules out distributions with thin-tails such as the Gaussian, but allows for Student-$t$ distributions. Assumption~\ref{Assumption:boundedness}(ii) imposes 
a conditional $q$th moment bound on the regression residual $e_{t+h}$ for 
$q>2$, which is needed to control autocovariance terms arising from serial 
dependence. Assumption~\ref{Assumption:boundedness}(iii) is a standard 
moment condition on the influence function that, together with 
Assumption~\ref{Assumption:mixing}, ensures that a central limit theorem 
applies to the oracle estimator based on the true nuisance functions.

\begin{assumption}\label{Assumption:learners}
For each $\ell=1,\ldots,K$, $\hat{g}_\ell\in\mathcal{G}_T$ and 
$\hat{\alpha}_\ell\in\mathcal{A}_T$ with probability converging to one, 
where $\mathcal{G}_T$ and $\mathcal{A}_T$ denote shrinking neighborhoods 
of $g_{0,h}$ and $\alpha_0$, respectively, and $q$ is as in 
Assumption~\ref{Assumption:boundedness}. In addition,
\begin{itemize}
  \item[(i)] $r_{g,q,T}\equiv \sup_{g\in\mathcal{G}_T}\bigl(E\bigl[|g
  (x_t,\mathbf{z}_{t-1})-g_{0,h}(x_t,\mathbf{z}_{t-1})|^{q}\bigr]
  \bigr)^{1/q}=o(1)$;\\
  $r_{\alpha,q,T}\equiv \sup_{\alpha\in\mathcal{A}_T}\bigl(E
  \bigl[|\alpha(x_t,\mathbf{z}_{t-1})-\alpha_0(x_t,\mathbf{z}_{t-1})|^{q}
  \bigr]\bigr)^{1/q}=o(1)$.
  \item[(ii)] $\sqrt{T}\, r_{g,T}\, r_{\alpha,T} = o(1)$, where 
  $r_{g,T}\equiv r_{g,2,T}$ and $r_{\alpha,T}\equiv r_{\alpha,2,T}$ 
  denote the $L_2$ convergence rates.
\end{itemize}
\end{assumption}
Assumption~\ref{Assumption:learners}(i) imposes $L_q$ consistency on both 
nuisance estimators, which implies $L_2$ consistency since $q>2$. 
Assumption~\ref{Assumption:learners}(ii) is the product rate condition 
standard in the double machine learning literature; see Assumption 2 of \cite{chernozhukov2024covariate}. The $L_q$ rates in (i) are stronger than 
the $L_2$ rates typically assumed in the i.i.d.\ case and are needed here 
to control autocovariance terms arising from serial dependence in $e_{t+h}$ 
and $(x_t,\mathbf{z}_{t-1})$; see Remarks~\ref{rem:autocovariance} and 
\ref{rem:iid} below.

\begin{theorem}\label{Theorem:NLO2}
Suppose that Assumptions~\ref{Assumption:mixing}, 
\ref{Assumption:boundedness}, and \ref{Assumption:learners} hold. Then
\begin{equation*}
\sqrt{T}(\hat{\theta}_h-\theta_{0,h}) = \frac{1}{\sqrt{T}}\sum_{t=1}^{T}
\psi(y_{t+h},x_t,\mathbf{z}_{t-1},g_{0,h},\alpha_0,\theta_{0,h}) + o_p(1) \to_d N(0,V_h),
\end{equation*}
where $V_h$ is the long-run variance of $\psi(y_{t+h},x_t,\mathbf{z}_{t-1},g_{0,h},\alpha_0,\theta_{0,h})$.
\end{theorem}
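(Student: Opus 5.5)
The plan is to decompose the cross-fitted estimator block by block into an oracle term plus remainders, and then show that each remainder is $o_p(1)$ after $\sqrt{T}$ scaling. Write $\psi_t(g,\alpha)\equiv g(x_t+\delta,\mathbf{z}_{t-1})-g(x_t,\mathbf{z}_{t-1})+\alpha(x_t,\mathbf{z}_{t-1})(y_{t+h}-g(x_t,\mathbf{z}_{t-1}))$, so that $\psi_t(g_0,\alpha_0)-\theta_{0,h}$ is the influence function. Then
\begin{equation*}
\sqrt{T}(\hat\theta_h-\theta_{0,h})=\frac{1}{\sqrt{T}}\sum_{t=1}^T\bigl(\psi_t(g_0,\alpha_0)-\theta_{0,h}\bigr)+\sum_{\ell=1}^K\frac{1}{\sqrt{T}}\sum_{t\in I_\ell}\bigl(\psi_t(\hat g_\ell,\hat\alpha_\ell)-\psi_t(g_0,\alpha_0)\bigr).
\end{equation*}
Since $K$ is fixed, it suffices to show that each block remainder is $o_p(1)$. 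For the oracle term, the convergence to $N(0,V_h)$ follows from a CLT for stationary geometrically $\beta$-mixing sequences (e.g.\ Ibragimov's, or the Davydov-inequality-based version), using Assumption~\ref{Assumption:boundedness}(iii). Measurability of $y_{t+h}$ with respect to finitely many future $z$'s gives the same mixing rate up to a shift of $h$. Mean zero holds by Proposition~\ref{Prop:DR}.

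Next, split each block remainder algebraically into three pieces, with $\Delta g=\hat g_\ell-g_0$ and $\Delta\alpha=\hat\alpha_\ell-\alpha_0$. The first is $R_1=[\Delta g(x_t+\delta,\cdot)-\Delta g(x_t,\cdot)-\alpha_0\Delta g(x_t,\cdot)]$. The second is $R_2=\Delta\alpha\, e_{t+h}$. The third is the cross term $R_3=-\Delta\alpha\,\Delta g(x_t,\cdot)$. By the Riesz representation \eqref{eq:Neyman}, $R_1$ has mean zero for any fixed $g$. $R_2$ has mean zero because $E[e_{t+h}\mid x_t,\mathbf{z}_{t-1}]=0$. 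Cauchy--Schwarz bounds the cross term by $E|R_3|\le r_{g,T}r_{\alpha,T}$, so $\sqrt{T}$ times its mean is $o(1)$ by Assumption~\ref{Assumption:learners}(ii).

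To handle the dependence between $I_\ell$ and the training sample, use the Strassen/Berbee coupling as in \citet{semenova2023}. Since $I^{\mathrm{qc}}_\ell$ is separated from $I_\ell$ by at least $T_\ell$ periods, we can construct a copy $\{\tilde z_t\}_{t\in I_\ell}$ that is independent of $I^{\mathrm{qc}}_\ell$ and equal to the original with probability at least $1-2\beta(T_\ell-h)\to1$ geometrically fast. On this event, conditional on the training data, $(\hat g_\ell,\hat\alpha_\ell)$ are fixed functions in $\mathcal{G}_T\times\mathcal{A}_T$, with probability tending to one. It therefore suffices to bound, uniformly over fixed $g\in\mathcal{G}_T$ and $\alpha\in\mathcal{A}_T$, the second moments of the centered sums $T^{-1/2}\sum_{t\in I_\ell}(R_{j,t}-ER_{j,t})$ for $j=1,2,3$. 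Conditional Chebyshev then gives $o_p(1)$.

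The main obstacle is that variance bound. In the i.i.d.\ case the variance of the sum is simply $E[R_{j,t}^2]$, which is $O(r^2_{g,T})$ or $O(r^2_{\alpha,T})$. Under serial dependence, however, the autocovariances must be summed. I would use Davydov's covariance inequality for $\beta$-mixing (hence $\alpha$-mixing) sequences: $|\mathrm{Cov}(R_{j,t},R_{j,t+k})|\le C\beta(k-h)^{1-2/q}\|R_{j,t}\|_q^2$. Geometric decay makes $\sum_k\beta(k)^{1-2/q}<\infty$, so the variance is $O(\|R_j\|_q^2)$. The remaining task is to show these $L_q$ norms vanish. For $R_1$, combine Assumption~\ref{Assumption:boundedness}(i) with the $L_q$ rate $r_{g,q,T}$. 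The shifted term $\Delta g(x_t+\delta,\cdot)$ is controlled by a change of measure: its $q$-th moment equals $E[(1+\alpha_0)|\Delta g|^q]\le(1+\bar\alpha)r^q_{g,q,T}$. For $R_2$, apply Hölder with the conditional moment bound in Assumption~\ref{Assumption:boundedness}(ii), which yields $E|\Delta\alpha\,e|^{r}\lesssim \bar\sigma^{r}_q E|\Delta\alpha|^{r}$ for some $2<r\le q$. Covariance inequalities with exponent slightly below $q$ absorb the H\"older loss. For $R_3$, the $L_{q/2}$ bound $r_{g,q,T}r_{\alpha,q,T}$ suffices, possibly after trimming within the shrinking neighborhoods. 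If exponents do not align, I would first try to bound $\|R_3-ER_3\|$ via its $L_1$ norm, which is $o(T^{-1/2})$ directly, rather than through its variance. Summing over the $K$ blocks and adding back the coupling error completes the asymptotic linear representation, and the stated normal limit follows.
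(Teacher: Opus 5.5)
Your proposal is correct and follows essentially the same route as the paper. The shared steps are the oracle-plus-$(R_1,R_2,R_3)$ decomposition, the NLO coupling argument (invoked in the paper as Lemma~\ref{lem:huang5}), and mean-zero of $R_1$ and $R_2$ via the Riesz representation and the LIE. The bounds also match: Davydov-type covariance inequalities driven by the $L_q$ rates, with the same change-of-measure bound for the shifted term in $R_1$, the LIE with Assumption~\ref{Assumption:boundedness}(ii) for $R_2$ (where no H\"older loss actually arises, since $\Delta\alpha$ is a function of $(x_t,\mathbf{z}_{t-1})$), and Cauchy--Schwarz for the bias of $R_3$.

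The one point to tighten is that your primary $L_{q/2}$ variance route for $R_3$ needs $q>4$. The $L_1$/Markov fallback you mention, $E\bigl|T^{-1/2}\sum_{t\in I_\ell}(R_{3,t}-ER_{3,t})\bigr|\le 2\sqrt{T}\,r_{\alpha,T}\,r_{g,T}=o(1)$, works for all $q>2$, and it is exactly the argument the paper uses.
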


The proof of Theorem~\ref{Theorem:NLO2} is in Appendix~\ref{sec:proofs}. Letting $\tilde{\theta}_{h}$ denote the oracle estimator which assumes we know $g_0$ and $\alpha_0$, we decompose
\[
\sqrt{T}(\hat{\theta}_h-\theta_{0,h}) = \sqrt{T}(\tilde{\theta}_h-\theta_{0,h})+ \sqrt{T}(\hat{\theta}_h-\tilde{\theta}_h),
\]and show that $\sqrt{T}(\hat{\theta}_h-\tilde{\theta}_h)\equiv \sqrt{T}(R_1+R_2+R_3)=o_p(1)$, where $R_1$, $R_2$ and $R_3$ are remainder terms defined in Appendix~\ref{sec:proofs}. Assumptions~\ref{Assumption:mixing}, \ref{Assumption:boundedness} and \ref{Assumption:learners} suffice for proving that these remainders are $o_p(T^{-1/2})$.
\begin{remark}\label{rem:autocovariance}
Assumption~\ref{Assumption:boundedness}(ii) and the $L_q$ rate on 
$\hat{\alpha}_\ell$ in Assumption~\ref{Assumption:learners}(i) can both 
be relaxed to $q=2$, as is standard in the i.i.d.\ case, provided we 
impose a direct condition on the autocovariance structure of $e_{t+h}$. 
For instance, it suffices to assume $E(e_{t+h}|x_t,\mathcal{F}^{t-1})=0$, where $\mathcal{F}^{t-1}$ denotes the lagged history of $z_t$, together with uniform boundedness of the conditional autocovariances. Alternatively, one can drop the martingale difference condition and instead assume absolute summability of the conditional autocovariance sequence. The $L_q$ rate on $\hat{g}_\ell$ in 
Assumption~\ref{Assumption:learners}(i) is still required, as it controls 
autocovariance terms in $R_1$ arising from serial dependence in 
$(x_t,\mathbf{z}_{t-1})$ rather than in $e_{t+h}$.
\end{remark}
\begin{remark}\label{rem:iid}
For the special case where $x_t$ is i.i.d.\ and $x_t\perp U_{t+h}$ 
without control variables $\mathbf{z}_{t-1}$, the $L_q$ rate on 
$\hat{g}_\ell$ in Assumption~\ref{Assumption:learners}(i) is not required. 
The main reason is that we can exploit the independence of $x_t$ when 
showing that $\sqrt{T}R_1=o_p(1)$. However, even when $x_t$ is i.i.d., 
$e_{t+h}$ may be serially dependent, so the $L_q$ rate on $\hat{\alpha}_\ell$ 
in Assumption~\ref{Assumption:learners}(i) is still needed for 
$\sqrt{T}R_2=o_p(1)$. Alternatively, only $L_2$ rates on $\hat\alpha_\ell$ 
are required if further conditions on the autocovariance structure of 
$e_{t+h}$ are imposed, as discussed in Remark~\ref{rem:autocovariance}.
\end{remark}

The main implication of Theorem~\ref{Theorem:NLO2} is that the preliminary estimation of 
$g_{0,h}$ and $\alpha_0$ does not affect the first-order asymptotic 
distribution of $\hat{\theta}_h$: the estimator is asymptotically 
equivalent to the infeasible oracle estimator that uses the true nuisance 
functions. This is a consequence of the Neyman orthogonality of the moment 
condition \eqref{eq:moment_DR} combined with the approximate independence 
between $I_\ell$ and $I^{\mathrm{qc}}_\ell$ under NLO cross-fitting. A feasible confidence interval for $\theta_{0,h}$ can be constructed using 
a standard HAC estimator of $V_h$ based on the estimated influence function 
$\psi(y_{t+h},x_t,\mathbf{z}_{t-1},\hat{g}_\ell,\hat{\alpha}_\ell,\hat{\theta}_h)$.
\section{Conditional impulse response functions}\label{sec:conditionalIRF}
Conditional impulse response functions are often of interest in applications 
such as in Examples~\ref{ex:state} or \ref{ex:inter}. A generalization of 
Definition~\ref{def:CAR} to conditional IRFs is as follows.

\begin{definition}\label{def:CAR_1}
The conditional average response function of $y_{t+h}$ to a shock of size 
$\delta$ in $\varepsilon_{1t}$ is defined as $CAR_h(\delta, \omega) \equiv 
E\left(y_{t+h}\left(\varepsilon_{1t}+\delta\right)-y_{t+h}\left(\varepsilon_{1t}
\right)| \Omega_{t}=\omega\right)$, where $\Omega_{t}$ denotes the conditioning 
set.
\end{definition}

Since $\varepsilon_{1t}$ is random, the conditional expectation in 
Definition~\ref{def:CAR_1} averages over all possible realizations of 
$\varepsilon_{1t}$ (in addition to the other sources of randomness that enter 
into the potential outcomes through $U_{t+h}$), conditionally on $\Omega_{t}=\omega$. 
The choice of $\omega$ in $CAR_h(\delta,\omega)$ is context-dependent. For 
instance, in Example~\ref{Exstate} the conditioning set $\Omega_{t}$ is the 
state variable at time $t-1$, i.e.\ $\Omega_{t}=S_{t-1}$, so $\omega$ is 
either 0 or 1, while in Example~\ref{ex:cloyne} the conditioning set is 
$\Omega_{t}=r_{t}$, so $\omega$ can take on any value in the support of 
$r_{t}$.

The following proposition provides conditions under which $CAR_h(\delta,\omega)$ 
is identified.
\begin{proposition}\label{Prop:identification_CAR}
Suppose that $z_t=(x_t,y_t)'$ satisfies \eqref{eq:S1} and \eqref{eq:S2} 
where $\varepsilon_t=(\varepsilon_{1t},\varepsilon_{2t})'$ is a vector of mutually independent shocks that is distributed
i.i.d.$(0,\Sigma)$ with $\Sigma=\mathrm{diag}(\sigma^2_1,\sigma^2_2)$. 
Suppose further that $x_t \perp U_{t+h} \mid (\mathbf{z}_{t-1}, \Omega_t)$.
It follows that
\begin{equation}
  \mathrm{CAR}_h(\delta, \omega) = E\!\left[g_{0,h}(x_t+\delta,\, 
  \mathbf{z}_{t-1}, \omega) - g_{0,h}(x_t,\, \mathbf{z}_{t-1}, \omega) 
  \mid \Omega_t = \omega\right],
  \label{eq:ident_CAR} 
\end{equation}
where $g_{0,h}(x, z, \omega) \equiv E(y_{t+h} \mid x_t = x,\, 
\mathbf{z}_{t-1} = z,\, \Omega_t = \omega)$.
\end{proposition}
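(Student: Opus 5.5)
The proof follows the same route as Proposition~\ref{Prop:identification}, run conditionally on $\Omega_t=\omega$. I first rewrite the potential outcomes in terms of $x_t$. Because \eqref{eq:S1} is additive, a $\delta$-shift in $\varepsilon_{1t}$ with $\mathbf{z}_{t-1}$ held fixed is the same as a $\delta$-shift in $x_t$. Since $\mathbf{z}_{t-1}$ is a component of $U_{t+h}$, we can write $y_{t+h}(\varepsilon_{1t}+\delta)=\tilde m_h(x_t+\delta,U_{t+h})$ and $y_{t+h}(\varepsilon_{1t})=\tilde m_h(x_t,U_{t+h})$. Here $\tilde m_h(x,U)\equiv m_h(x-\phi(\mathbf{z}_{t-1}),U)$ is the reparametrization used in the first remark after Proposition~\ref{Prop:identification}. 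So $\mathrm{CAR}_h(\delta,\omega)=E[\tilde m_h(x_t+\delta,U_{t+h})-\tilde m_h(x_t,U_{t+h})\mid\Omega_t=\omega]$, and it suffices to treat each of the two terms separately.

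For a generic shift $d\in\{0,\delta\}$, I would apply the law of iterated expectations, conditioning first on $(x_t,\mathbf{z}_{t-1},\Omega_t)$:
\[
E[\tilde m_h(x_t+d,U_{t+h})\mid\Omega_t=\omega]=E\bigl[\,E[\tilde m_h(x_t+d,U_{t+h})\mid x_t,\mathbf{z}_{t-1},\Omega_t]\mid\Omega_t=\omega\bigr].
\]
The assumption $x_t\perp U_{t+h}\mid(\mathbf{z}_{t-1},\Omega_t)$ then gives, for fixed $(x,z,\omega)$,
\[
E[\tilde m_h(x+d,U_{t+h})\mid x_t=x,\mathbf{z}_{t-1}=z,\Omega_t=\omega]=\int \tilde m_h(x+d,u)\,dP_{U\mid z,\omega}(u).
\]
The right-hand side does not depend on the conditioning value $x$. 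Because the integral is free of $x$, I can re-evaluate it at $x_t=x+d$ instead of $x_t=x$:
\[
\int \tilde m_h(x+d,u)\,dP_{U\mid z,\omega}(u)=E[\tilde m_h(x_t,U_{t+h})\mid x_t=x+d,\mathbf{z}_{t-1}=z,\Omega_t=\omega]=g_{0,h}(x+d,z,\omega),
\]
where the last step uses $y_{t+h}=\tilde m_h(x_t,U_{t+h})$. Substituting back, with $\Omega_t$ fixed at $\omega$ inside the outer conditional expectation, and taking the difference between $d=\delta$ and $d=0$ yields \eqref{eq:ident_CAR}.

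The main obstacle is the support and well-definedness issue in the step that re-evaluates the conditional expectation at $x+\delta$. That step requires $x+\delta$ to lie in the conditional support of $x_t$ given $(\mathbf{z}_{t-1},\Omega_t)=(z,\omega)$, so that $g_{0,h}(x+\delta,z,\omega)$ is defined. Unconditionally this holds because the shock has unbounded support, as discussed after Proposition~\ref{Prop:identification}. Conditioning on $\Omega_t$ could in principle restrict that support, for instance when $\Omega_t=r_t$ depends on $x_t$. I would therefore state the argument under the implicit requirement that the conditional support of $x_t$ is invariant to $\delta$-shifts, or equivalently that $g_{0,h}(\cdot,z,\omega)$ admits the integral representation above on the shifted support. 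I would also flag that the conditional independence assumption is what carries the argument: it is no longer implied by the i.i.d.\ shock structure once $\Omega_t$ is contemporaneous, which is why the proposition imposes it explicitly. Integrability of $y_{t+h}(\cdot)$ is assumed so that all the expectations exist.
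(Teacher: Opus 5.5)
Your proposal is correct and follows essentially the same route as the paper, which reruns the argument for Proposition~\ref{Prop:identification} with $\mathbf{z}_{t-1}$ replaced by $(\mathbf{z}_{t-1},\Omega_t)$. That argument rewrites the potential outcomes as functions of $x_t$ via the additive structure, uses $x_t\perp U_{t+h}\mid(\mathbf{z}_{t-1},\Omega_t)$ to obtain $g_{0,h}(x,z,\omega)=E[\tilde m_h(x,U_{t+h})\mid \mathbf{z}_{t-1}=z,\Omega_t=\omega]$, and applies the law of iterated expectations conditional on $\Omega_t=\omega$; your explicit shifted-support caveat is a reasonable addition that the paper addresses only informally after Proposition~\ref{Prop:identification}.
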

The key condition in Proposition~\ref{Prop:identification_CAR} is 
$x_t \perp U_{t+h} \mid (\mathbf{z}_{t-1}, \Omega_t)$, which requires that 
conditionally on $(\mathbf{z}_{t-1}, \Omega_t)$, all remaining variation in 
$x_t$ comes from $\varepsilon_{1t}$ alone. This condition holds in two 
leading cases. First, if $\Omega_t$ is a function of $\mathbf{z}_{t-1}$ 
alone, then conditioning on $(\mathbf{z}_{t-1}, \Omega_t)$ is the same as 
conditioning on $\mathbf{z}_{t-1}$ alone, and conditional independence follows 
directly from \eqref{eq:S1} and \eqref{eq:S2}, as in 
Proposition~\ref{Prop:identification}. This covers Example~\ref{Exstate}, 
where $\Omega_t = S_{t-1}$ is a function of $\mathbf{z}_{t-1}$.\footnote{This 
corresponds to the case where the state of the economy depends on lags of the 
outcome of interest as it is the case when studying state-dependent government 
spending multipliers.} Second, if $\Omega_t$ contains variables not in 
$\mathbf{z}_{t-1}$, conditional independence is satisfied if $\Omega_t \perp 
\varepsilon_{1t}$. This covers Example~\ref{ex:cloyne}, where $\Omega_t = 
r_t = f(x_{t-1}) + \varepsilon_{3t}$: since $x_{t-1}$ is dated $t-1$ and 
$\varepsilon_{3t} \perp \varepsilon_{1t}$ by assumption, $r_t \perp 
\varepsilon_{1t}$ and conditional independence holds. However, this would fail 
if $r_t$ depended on $x_t$ (and hence on $\varepsilon_{1t}$), for instance if 
$r_t = f(x_t) + \varepsilon_{3t}$.

In the special case where $\Omega_t$ is binary, as in Example~\ref{Exstate} 
where $\Omega_t=S_{t-1}\in\{0,1\}$, $CAR_h(\delta,\omega)$ takes two values 
$CAR_h(\delta,0)$ and $CAR_h(\delta,1)$, corresponding to the impulse 
responses in each state. Each can be estimated by applying the NLO cross-fitting estimator of 
Section~\ref{sec:estimation} separately to each subsample 
$\{t:\Omega_t=\omega\}$, $\omega\in\{0,1\}$.

When $\Omega_t$ is continuous, as in Example~\ref{ex:cloyne} where 
$\Omega_t=r_t$, $CAR_h(\delta,\omega)$ is a function of a continuous 
argument $\omega$. From \eqref{eq:ident_CAR}, it equals the conditional 
expectation of $g_{0,h}(x_t+\delta,\mathbf{z}_{t-1},\omega)-g_{0,h}(x_t,
\mathbf{z}_{t-1},\omega)$ given $\Omega_t=\omega$, which must be estimated 
nonparametrically as a function of $\omega$. This introduces an additional 
nonparametric estimation step beyond what is required for $\mathrm{ARF}_h(\delta)$. We leave a formal treatment of this case for future work.

\section{Simulation results}\label{sec:simulations}
This section evaluates the finite sample performance of the
semiparametric local projection estimator developed in
Section~\ref{sec:estimation}. We focus on the  nonlinear regressors model of Example~\ref{Exsignsize} and the state-dependent design of
Example~\ref{ex:state}. In both cases, the structural shocks $\varepsilon_{1t}$ and $\varepsilon_{2t}$ are
mutually independent, each drawn i.i.d.\ from a Student-$t$ distribution
with $\nu=10$ degrees of freedom rescaled to unit variance, so that the
size of the shock can be interpreted as a one-standard-deviation
shock.\footnote{With $\delta=1$ and the use of a $t$-distribution for generating $\varepsilon_{1t}$, the population Riesz representer
$\alpha_0(x)$ is uniformly bounded, satisfying
Assumption~\ref{Assumption:boundedness}(i).
}
Horizons range from $h = 0, 1, \ldots, 6$ for the first DGP where the IRFs are less persistent, whereas we set $h = 0, 1, \ldots, 8$ for the second DGP. The true impulse responses are computed by simulation from the structural model
using $5{,}000$ counterfactual paths after a burn-in period of $T_0 = 500$. The DR-NLO estimator uses $K=10$ ($K=20$) contiguous blocks of equal length $\lfloor T/K \rfloor$, as described in Section~\ref{sec:estimation} for the state-dependent (nonlinear regressor) model. In both cases, $x_t$ is assumed predetermined and $x_{t-1}$ and $y_{t-1}$ are used as control variables. To estimate the conditional mean $g_{0,h}$, we use a nonparametric series estimator
based on a Hermite polynomial with the total degree selected by the AIC up to a maximum value of $2$.\footnote{While Theorem 5.1 is stated for general nonparametric estimators whose complexity grows to satisfy the rate conditions in Assumption 3, our numerical implementation utilizes a low-complexity approximation to manage the bias-variance tradeoff.}
The Riesz representer $\alpha_0$ is estimated by the LASSO minimum-distance
procedure of \citet{chernozhukov2022}, using the same
Hermite dictionary at a fixed total degree of $2$. HAC standard errors for the DR-NLO estimator are computed from the cross-fitted influence function as
described in Section~\ref{sec:estimation}, using a Bartlett kernel and  the \citet{andrews1991} automatic plug-in bandwidth. To prevent a small number of rare explosive Monte Carlo replications from dominating the results, we symmetrically trim the top and bottom 1\% of the simulated IRF estimates.
\subsection{Simulations for sign model}
This section summarizes the simulation results for a design similar to Example~\ref{Exsignsize} given by\vspace{-0.2cm}
\begin{align*}
          x_t&=0.3x_{t-1}+0.3y_{t-1}+\varepsilon_{1t}\\
           y_t&=0.5 x_{t}+0.3x_{t-1}+0.5 y_{t-1} -0.4\max(x_{t},0)-0.3\max(x_{t-1},0)+\varepsilon_{2t}.
\end{align*}\vspace{-0.2cm}

\begin{figure}[H]
    \centering    
    \caption{Performance of the DR-NLO estimator for  the nonlinear regressor DGP}
    \label{fig:nonlin}
    \includegraphics[width=0.95\textwidth]{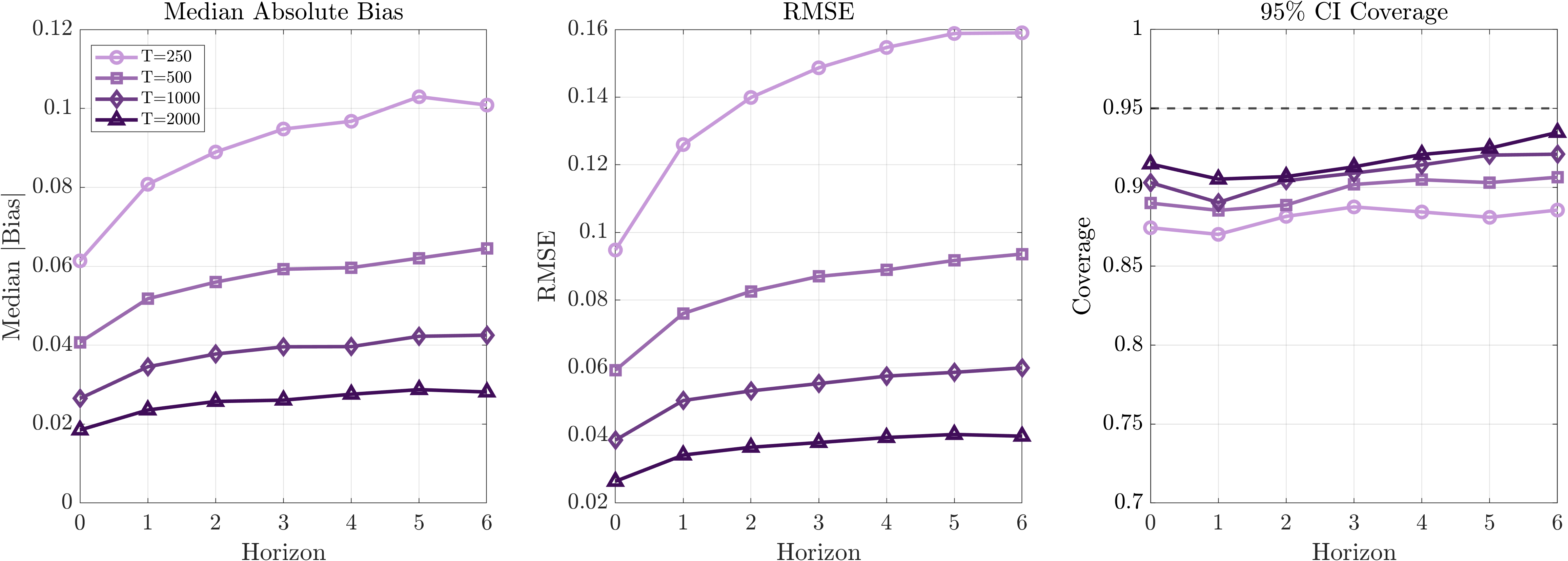}
     \caption*{\footnotesize Note: Median Absolute Bias, RMSE and Coverage for different sample sizes.} 
\end{figure}

For the sake of brevity and because simulations evaluating the performance of the plug-in and LP estimators in models with nonlinear regressors can be found in \cite{goncalves2021, goncalves2024b}, we focus on the DR-NLO estimator. We use $20,000$ replications per sample size, letting $T\in\{250,500,1000,2000\}$. For each sample size, we report the median absolute bias, root mean squared error (RMSE) and nominal coverage of the 95\% confidence intervals. The simulation results reported in Figure~\ref{fig:nonlin} indicate that DR-NLO performs well when the object of interest is the ARF in models with nonlinear regressors. The bias is small and close to zero across horizons and sample sizes, indicating that the estimator is approximately unbiased. RMSE decreases with $T$, as expected for a consistent estimator. Although coverage falls below the nominal 95\%, it improves with increasing sample size. Overall, the results suggest that the estimator performs reasonably well in modestly large samples.

\subsection{Simulations for state dependent model}
We report simulation results for the state dependent model in Example~\ref{ex:state}, where $x_t$ is predetermined so that $x_t=\rho_x x_{t-1}+\psi_x y_{t-1}+\varepsilon_{1 t}$ and $S_{t-1} = \mathbf{1}\{y_{t-1} > 0\}$ classifies the previous
period as an expansion ($S_{t-1}=1$) or recession ($S_{t-1}=0$).
The parameters are set to $\rho_x=0.3$, $\psi_x=-0.1$, $\beta_E=2.5$, $\beta_R=3.5$,$\gamma_E=0.9$, and $\gamma_R=-0.1$.   For each replication, we apply each estimator separately to the expansion
($S_{t-1}=1$) and recession ($S_{t-1}=0$) sub-samples. 
We compare our estimator to the widely used state-dependent local projection (SD-LP), implemented by OLS with state-by-covariate interactions and one lag of $(x,y)$ as controls. The Monte Carlo design uses $30{,}000$ replications per sample
size.  For each
estimator, sample size, state (expansion and recession) and horizon, we report the empirical
median bias, RMSE and the coverage of nominal $95\%$
confidence intervals across replications.\footnote{In this example, we report median bias instead of median absolute bias in order to highlight the direction of the bias, namely that the LP estimator is negatively biased.}

\begin{figure}[htpb!]
    \centering
    \caption{Median Bias}
    \label{fig:medbias_t10}
    \includegraphics[width=0.85\textwidth]{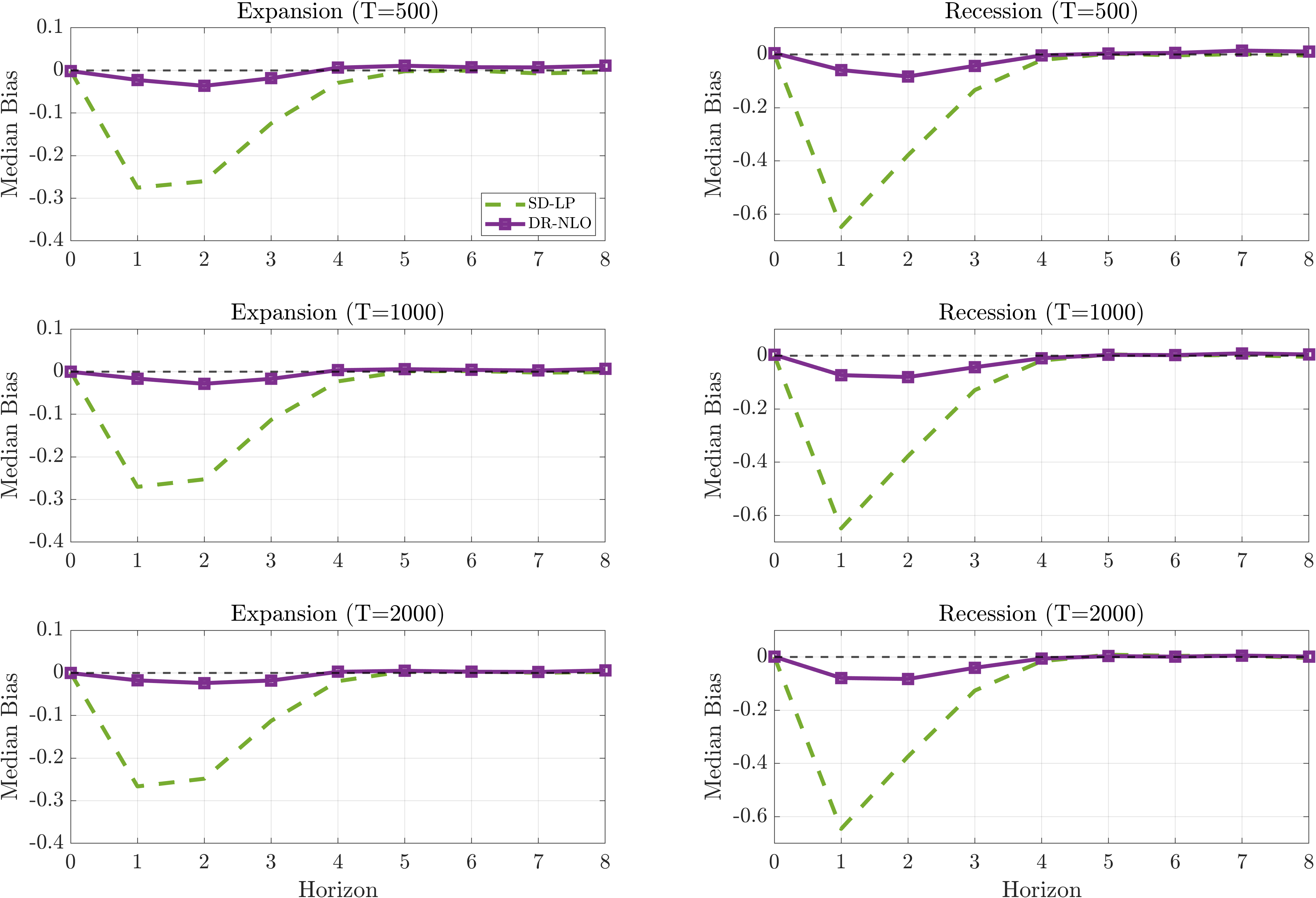}
    \caption*{\footnotesize Note: Median bias of DR-NLO and LP estimators for different sample sizes and $\delta=1$.} 
\end{figure}

\begin{figure}[htpb]
    \centering
    \caption{Root Mean Squared Error (RMSE)}
    \label{fig:RMSE_t10}
       \includegraphics[width=0.85\textwidth]{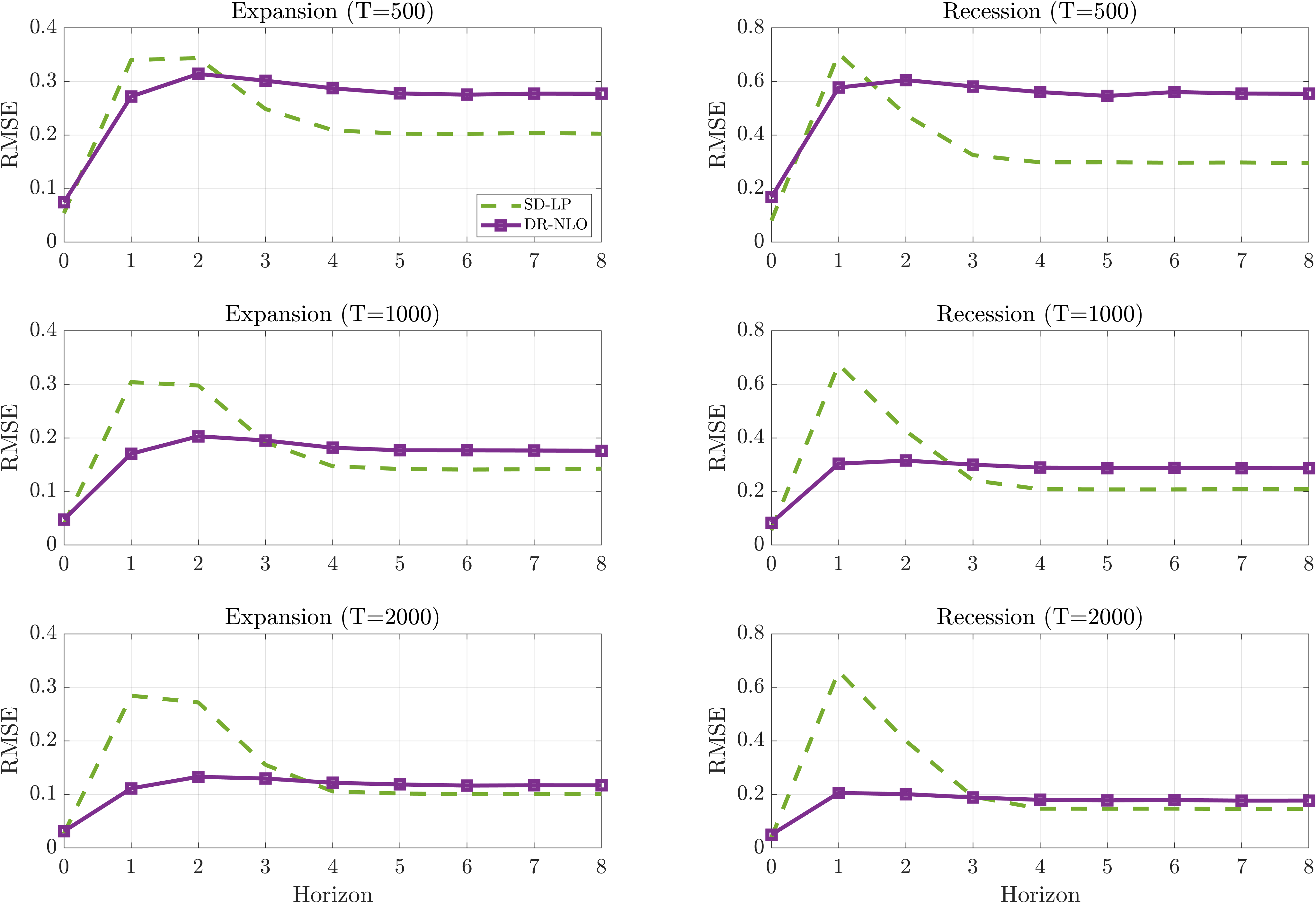}
     \caption*{\footnotesize Note: RMSE of DR-NLO and LP estimators for different sample sizes and $\delta=1$.} 
\end{figure}

\begin{figure}[htpb]
    \centering
    \caption{Empirical coverage rates of 95\% confidence intervals}
     \label{fig:Coverage_t10}
     \includegraphics[width=0.85\textwidth]{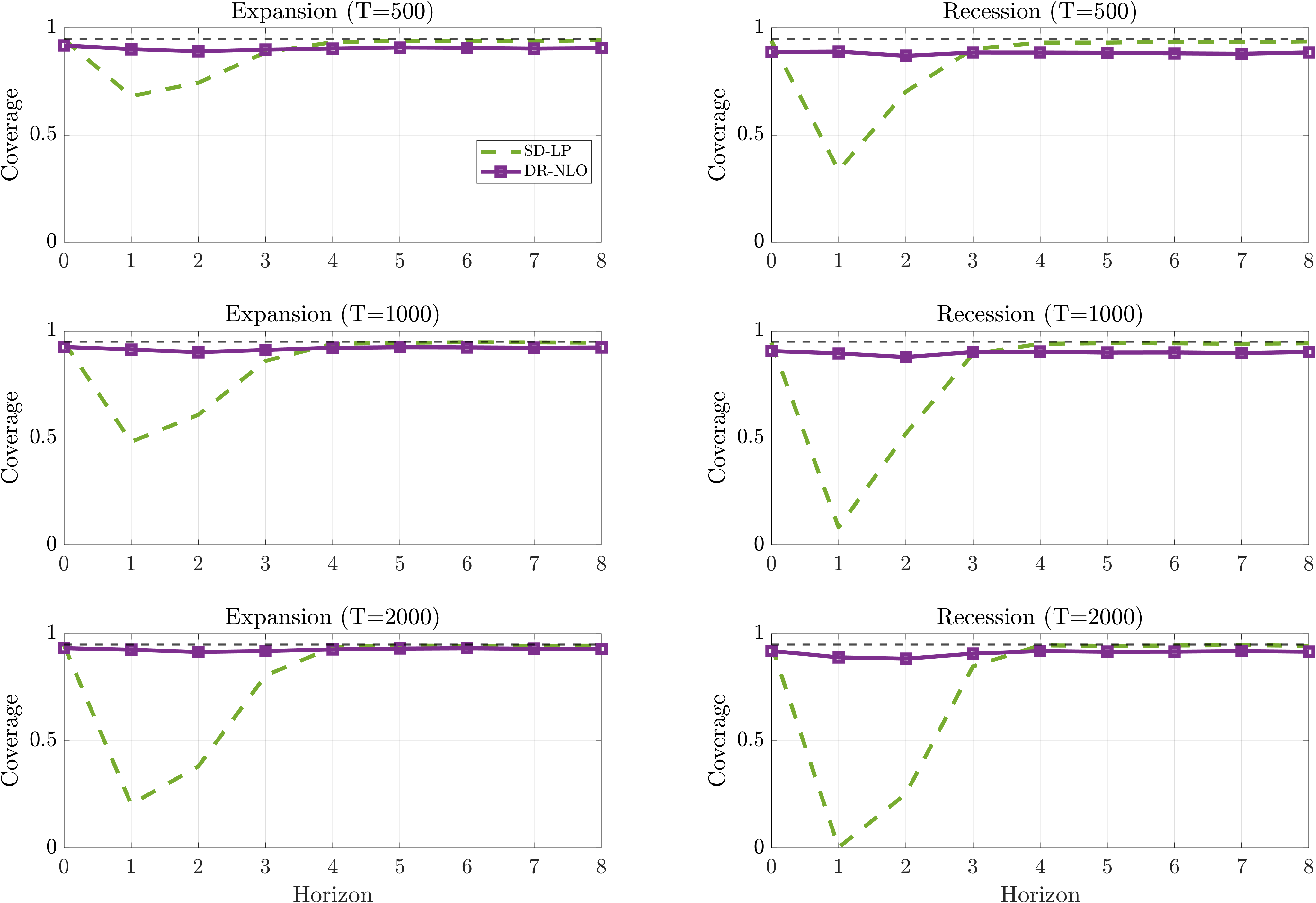}
     \caption*{\footnotesize Note: Coverage rates of 95\% intervals based on DR-NLO and SD-LP for different sample sizes and $\delta=1$.} 
\end{figure}

Figure~\ref{fig:medbias_t10} illustrates the systematic downward bias of the traditional state-dependent LP (SD-LP) estimator in both expansions and recessions, which does not vanish as the sample size grows. As explained in \citet{goncalves2024a}, when the state is endogenous, the SD-LP specification does not recover the conditional average response to a fixed size shock because state-dependent dynamics in $(\beta_{t-1},\gamma_{t-1})$ introduce nonlinearities that are not accounted for by the regression. This is not a small sample issue, but a feature of the SD-LP itself. By contrast, the semiparametric DR-NLO estimator has smaller median bias at every horizon and for every sample size considered, and its bias decreases with $T$, as predicted by Theorem~\ref{Theorem:NLO2}. The substantial bias of SD-LP at shorter horizons explains why its RMSE is greater than that of DR-NLO, despite the latter having slightly 
higher variance. At longer horizons, the bias of the SD-LP declines, resulting in a lower RMSE. As $T$ increases, differences in RMSE between the two estimators vanish at longer horizons, with DR-NLO dominating SD-LP at shorter horizons.

Figure~\ref{fig:Coverage_t10} shows the empirical coverage probabilities 
of 95\% confidence intervals based on SD-LP and DR-NLO, both using HAC 
variance estimators with a Bartlett kernel and Andrews (1991)'s automatic 
bandwidth. The SD-LP intervals exhibit substantial undercoverage for all 
sample sizes, which worsens as $T$ increases since the variance shrinks 
while the bias persists. DR-NLO slightly undercovers as well, but its 
coverage remains closer to the nominal 95\% level and improves with $T$.

 \section{Empirical illustrations}\label{sec:empirical}

Semiparametric LP estimators such as the DR-NLO estimator are useful not
only for capturing nonlinearities when the functional form of the
nonlinearity is unknown, as illustrated by our simulation evidence, but also
as a diagnostic tool for judging the adequacy of linear approximations. To
illustrate how our semiparametric LP estimator may be used to assess the
adequacy of the linear LP estimator, we apply both methods to study the
pass-through from retail gasoline price shocks to inflation in the United
States.\footnote{The reason the recent literature has focused on retail gasoline prices
rather than the price of crude oil is that the relationship between crude
oil and retail gasoline prices is unstable over time, reflecting large
variation in the cost share of crude oil in the retail price of gasoline
(see \citet{kilian2022oil}).} This has been a question of continued policy
interest, especially in recent years. There has been a proliferation of
research addressing this question using linear VAR and distributed lag
models (e.g., \citet{chudik2022estimation}; \citet{kilian2022oil, kilian2025oil}). The
use of a semiparametric LP estimator is natural in this context since it has
long been suspected that this pass-through may be nonlinear. A particular
concern is that the pass-through may be stronger when the pre-existing level
of inflation is higher. This concern has become particularly relevant in
recent months, with the surge in headline inflation following the outbreak
of the Iran War in late February 2026, but similar concerns already arose during the 2022 surge in inflation.

While this question has been addressed by a number of studies, such as \citet{clark2010time}, \citet{grundler2024does} or \citet{de2025energy}, these
studies are based on parametric nonlinear models such as threshold VAR
models, regime-switching models, or time-varying coefficient VAR models. Two
obvious concerns are that these specifications are mutually exclusive and
that they do not exhaust the range of possible nonlinear specifications. Our
semiparametric approach allows us to dispense with these parametric
restrictions. We allow the effect of gasoline price shocks ($\varepsilon
_{1t}$) on inflation to depend on lagged headline and core inflation,
according to the model:\vspace{-0.2cm}
\begin{eqnarray*}
x_{t} &=&\phi (\mathbf{z}_{t-1})+\varepsilon _{1t} \\
y_{1t} &=&\mu _{1}(x_{t},\mathbf{z}_{t-1},\varepsilon
_{2t}) \\
y_{2t} &=&\mu _{2}(x_{t},\mathbf{z}_{t-1},\varepsilon
_{3t}),
\end{eqnarray*}
where $\mathbf{z}_{t-1}$ contains six lags of the percent
change in retail gasoline prices ($x_t$), headline inflation ($y_{1t}$) and core CPI inflation for all
urban consumers ($y_{2t}$). Core inflation is defined as inflation excluding food and
energy. All data are monthly and seasonally adjusted. The data source is
FRED. The model incorporates six lags consistent with other recent empirical
studies. We are interested in the inflation
responses at horizon $0,1,\ldots,12$. The estimation sample spans January 1974
through April 2026, which includes several high-inflation episodes.

We first compare the linear LP and DR-NLO estimates of the unconditional 
responses of headline and core inflation to a one percent gasoline price 
shock, using the same learners as in the Monte Carlo simulations and $K=10$. Evidence
that the semiparametric LP estimates are very different from the linear LP estimates would cast doubt on prior linear estimates of the
pass-through to inflation. Evidence that the two estimates are close, in
contrast, would reassure policymakers that existing LP and VAR estimates
based on linear approximations are informative. Figure~\ref{fig:empap1} illustrates that
the choice of the estimator matters little. It shows point estimates of the
unconditional impulse responses and 68\% and 90\% pointwise confidence bands. Both
estimates indicate that headline inflation jumps by close to 0.05 percentage
points (not annualized) in response to a one percentage point gasoline price
shock, but the response declines quickly. By horizon 2, it is close to zero.
There is no evidence that headline inflation rises persistently in response
to gasoline price shocks. The response of core inflation is an order of
magnitude smaller and only slightly positive. Both the linear LP and the DR-NLO estimate are marginally statistically significant at horizons 1 and 2. Overall, this evidence suggests
that linear approximations are adequate for assessing the unconditional
response of inflation to gasoline price shocks.\footnote{%
Very similar results would have been obtained based on a bivariate model for
gasoline prices and headline inflation.}
\begin{figure}[htpb]
    \centering
    \caption{Average Responses of Headline and Core Inflation to Gasoline Price Shocks}
    \label{fig:empap1}
    \includegraphics[width=0.8\textwidth]{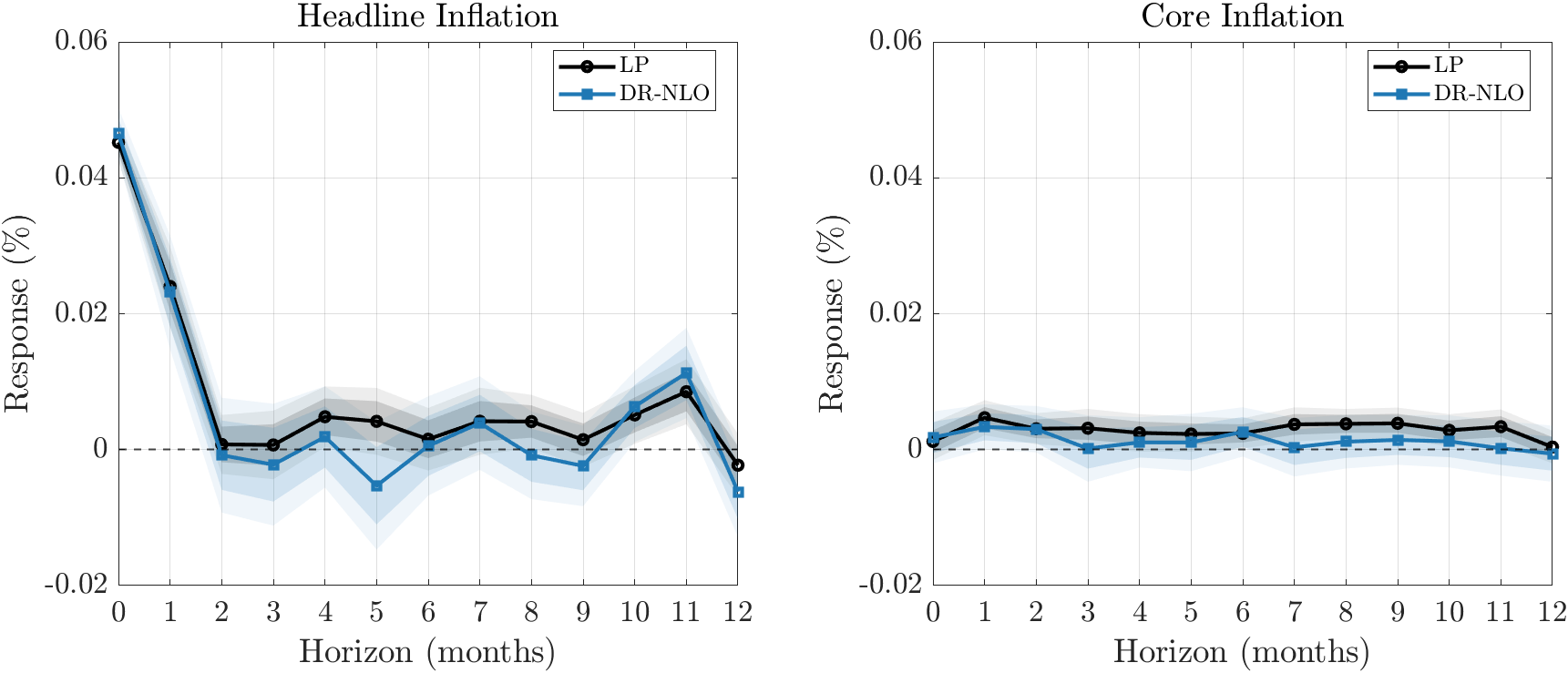}
     \begin{minipage}{0.8\textwidth}
        \footnotesize\textit{Notes:}  This figure reports LP and DR-NLO estimates of headline and core inflation to a one percentage point ($\delta=1$) increase in the price of gasoline. The shaded areas represent 68\% and 90\% confidence intervals.
    \end{minipage}
\end{figure}

Next, we turn to the question of whether there is evidence of larger
inflation responses conditional on the annualized inflation exceeding its historical
average of 3.8\% (i.e., 0.32\% monthly) during the estimation period. Figure~\ref{fig:empap2} shows the
unconditional response and the response conditional on inflation being above
average based on the DR-NLO estimator. There is no material change in the
headline and core inflation response estimates at horizons 0, 1 and 2. At
longer horizons, the conditional responses tend to be larger, but also less
precisely estimated. Even a user of the conditional response estimate would
be unable to conclude that there are statistically significant increases in
inflation in response to gasoline price shocks, however. Nor are the
conditional point estimates consistent with gasoline price shocks causing
persistent increases in headline or core inflation of a material magnitude.

\begin{figure}[htpb!]
    \centering
    \caption{Comparison of Average and Conditional Average Responses of Headline and Core Inflation to Gasoline Price Shocks}
    \label{fig:empap2}
    \includegraphics[width=0.8\textwidth]{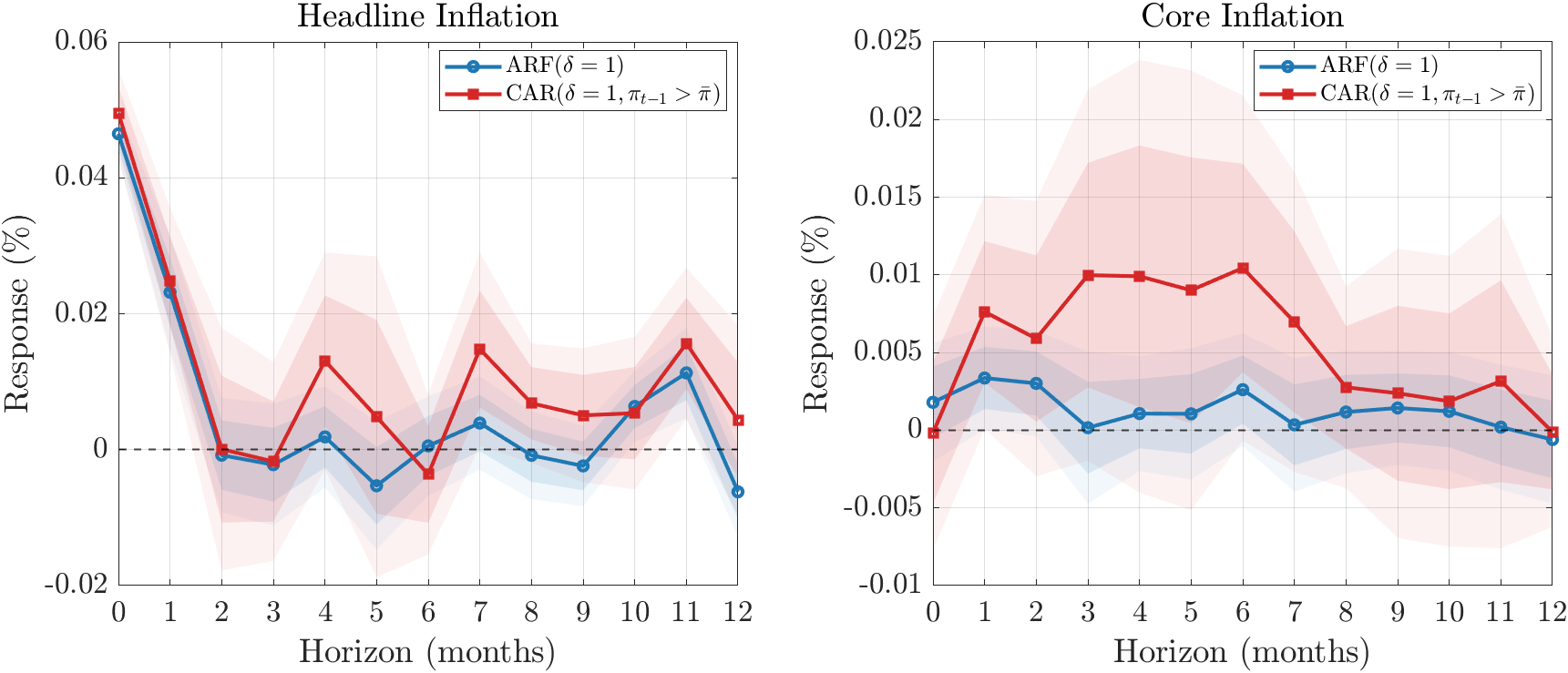}
       \begin{minipage}{0.8\textwidth}
        \footnotesize\textit{Notes:} This figure reports DR-NLO estimates of the unconditional average responses of headline and core inflation to a one percentage point gasoline price shock ($ARF_h(\delta)$ with $\delta=1$) and the corresponding average responses conditional on headline inflation being above its historical average in $t-1$ ($CAR_h(\delta, \pi_{t-1}> \bar{\pi}$) with $\delta=1$). The shaded areas represent 68\% and 90\% confidence intervals.
    \end{minipage}
\end{figure}

We conclude with another example that illustrates that the impulse responses implied by the DR-NLO estimator may look materially different from those implied by the linear LP estimator and more economically plausible. The question of interest is how sales of U.S. motor vehicles respond to shocks to the real price of motor gasoline. This question is motivated by \citet{edelstein2009sensitive} and \citet{ramey2006declining, ramey2011oil} who document potential nonlinearities in the response of motor vehicle sales. We rely on equations (\ref{eq:S1}) and (\ref{eq:S2}) with $z_{t}$ containing the percent change in real U.S. retail gasoline prices and the percent change in U.S. sales of automobiles and light trucks. Retail gasoline prices are obtained from the CPI and deflated by the aggregate CPI. All data are monthly and seasonally adjusted. The data source is FRED. The model incorporates six lags. The estimation period is January 1974 through March 2026.

\begin{figure}[htpb!]
    \centering
    \caption{Response of Motor Vehicle Sales to Real Gasoline Price Shocks ($\delta=1$ s.d.)}
    \label{fig:MVsales}
    \includegraphics[width=0.45\textwidth]{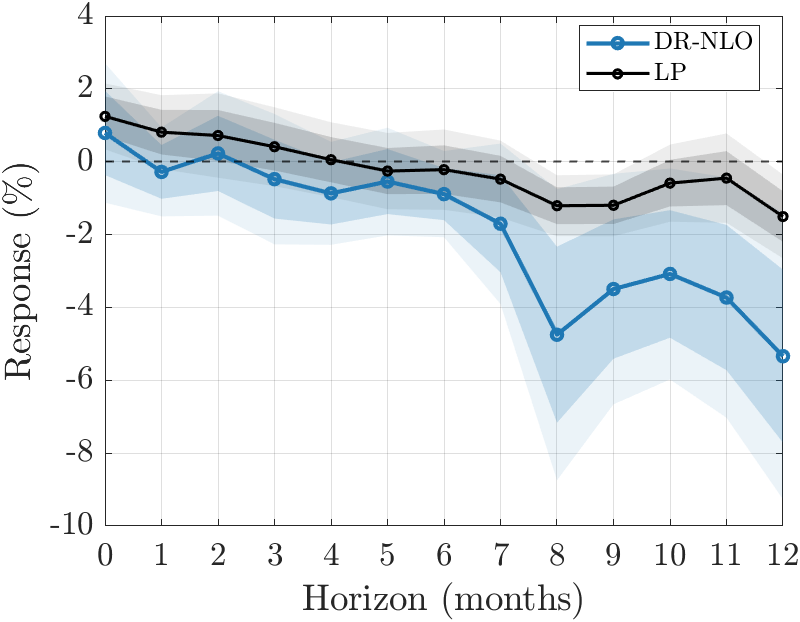}
    \begin{minipage}{0.8\textwidth}
        \footnotesize\textit{Note:} This figure reports LP and DR-NLO estimates of the cumulative response of motor vehicle sales to one standard deviation shock in the real price of gasoline. The shaded areas represent 68\% and 90\% confidence intervals.
    \end{minipage}
\end{figure}

Figure \ref{fig:MVsales} focuses on the responses of the cumulative growth rate to a one standard deviation shock in the real price of gasoline.\footnote{The cumulative effect is estimated directly by applying the LP or DR-NLO to  $\Delta^hy_{t+h}=y_{t+h}-y_{t-1}$.} The DR-NLO impact response is positive but statistically insignificant. There is no statistically significant response in sales for the first three months, according to the DR-NLO estimator, followed by a persistent decline in auto sales starting at horizon 3. The linear LP estimate, in contrast, suggests a statistically significant but economically counterintuitive increase in auto sales at horizons 0, 1 and 2. More generally, the persistently positive linear LP responses for the first four months are difficult to reconcile with economic reasoning. The numerical differences between the estimates compound at longer horizons. At horizon 12, the linear LP estimate shows a decline in auto sales that is only 28\% of the DR-NLO response. These differences suggest that linear approximations may not adequately capture the dynamics in question.

\section{Concluding remarks}\label{sec:conclusion}
This paper developed a semiparametric local projection estimator of unconditional nonlinear impulse response functions for a broad class of structural dynamic models that are widely used in applied macroeconomics, including models with nonlinearly transformed regressors, state-dependent coefficients, and nonlinear interactions between shocks and state variables. 
Under standard mixing and rate conditions on the nuisance estimators, the resulting estimator is $\sqrt{T}$-consistent
and asymptotically normal, with the preliminary estimation of the two nuisance functions having no effect on the first-order asymptotic distribution. Inference is conducted via a HAC long-run variance estimator applied to the estimated influence function. We also showed how our
framework accommodates conditional impulse responses in state-dependent models, where the conditioning variable takes discrete values.

Our Monte Carlo evidence indicates that the proposed estimator delivers substantially smaller bias than standard state-dependent local projection methods, while preserving competitive root mean squared errors and confidence-interval coverage close to the nominal level. We considered two empirical illustrations. The first one focused on the potentially nonlinear pass-through from gasoline price shocks to inflation. The second example examined whether linear LP estimators miss nonlinearities in the transmission of gasoline price shocks to motor vehicle sales.

There are several avenues for further research. First, our identification result relies on the additive separability of $\varepsilon_{1t}$ in equation \eqref{eq:S1}. One possible extension would be to allow the shock to interact nonlinearly with lagged state variables. Second, our treatment of conditional impulse responses assumes a discrete conditioning variable,
leaving the continuous-state case as discussed in Example~\ref{ex:inter} for future work.

\singlespacing
\bibliographystyle{agsm}
\bibliography{References}
\doublespacing
\setcounter{equation}{0}
\section*{Appendix} 
\appendix\setcounter{section}{0}
\section{Proof of results in Section~\ref{sec:ident}}\label{sec:proof_prop}
\begin{proof} [Proof of Proposition~\ref{Prop:identification}] By \eqref{eq:S1} and \eqref{eq:S2}, we can rewrite $y_{t+h} = m_h(x_t,\, U_{t+h})$,
for some function $m_h$ that maps $x_t$ (which we observe) and $U_{t+h}$ (defined as in the text) into $y_{t+h}$. Although $x_t$ is not independent of $U_{t+h}$ (unless $x_t=\varepsilon_{1t}$), the independence condition $\varepsilon_{1t}\perp U_{t+h}$ implies the conditional
independence assumption $x_t \perp U_{t+h} \mid \mathbf{z}_{t-1}$. This is the key identifying condition: conditionally on $\mathbf{z}_{t-1}$, all
remaining variation in $x_t$ comes from $\varepsilon_{1t}$ alone, which is
independent of $U_{t+h}$. Because the model for $x_t$ given in \eqref{eq:S1} is additive in $\varepsilon_{1t}$, a
$\delta$-shift in $x_t$ holding $\mathbf{z}_{t-1}$ fixed is identical to a $\delta$-shift
in $\varepsilon_{1t}$, so that $\theta_{0,h}\equiv\mathrm{ARF}_h(\delta)$ in Definition 1 can
equivalently be written as
$\theta_{0,h}=
  E\!\left[m_h(x_t + \delta,\, U_{t+h}) - m_h(x_t,\, U_{t+h})\right].
  \label{eq:ARF2}$
But for any fixed $x$ and $z$, $g_{0,h}(x,z)\equiv E(y_{t+h}\mid x_t=x,\,\mathbf{z}_{t-1}=z) = E\!\left[m_h(x,\, U_{t+h})\mid\mathbf{z}_{t-1}=z\right]$
where the second equality uses the fact that $U_{t+h}$ is independent of $x_t$, conditionally on $\mathbf{z}_{t-1}$. The desired result follows by applying the law of iterated expectations (LIE).
\end{proof}

\begin{proof}[Proof of Proposition~\ref{Prop:DR}]
By \eqref{eq:Neyman} with $g_h=g_{0,h}$, we have 
$E[g_{0,h}(x_t+\delta,\mathbf{z}_{t-1})-g_{0,h}(x_t,\mathbf{z}_{t-1})]
=E[\alpha_0(x_t,\mathbf{z}_{t-1})g_{0,h}(x_t,\mathbf{z}_{t-1})]=\theta_{0,h}$.
Substituting $\theta_h=\theta_{0,h}$ in \eqref{eq:moment_DR} and using 
$E[\alpha_0(x_t,\mathbf{z}_{t-1})(y_{t+h}-g_{0,h}(x_t,\mathbf{z}_{t-1}))]=0$, 
which follows by LIE since $E[y_{t+h}-g_{0,h}(x_t,\mathbf{z}_{t-1})|x_t,\mathbf{z}_{t-1}]=0$,
gives the result.
\end{proof}

\begin{proof}[Proof of Proposition~\ref{Prop:identification_CAR}]
The proof follows by the same arguments as the proof of 
Proposition~\ref{Prop:identification}, with $\mathbf{z}_{t-1}$ replaced 
by $(\mathbf{z}_{t-1},\Omega_t)$ throughout, and applying the law of iterated expectations (LIE) conditioning 
additionally on $\Omega_t=\omega$.
\end{proof}
\section{Proofs of results in Section~\ref{sec:estimation} }\label{sec:proofs}
To prove our results, we rely on Lemma A.6 of \citet{semenova2023}. For completeness, we describe this result next. 

\begin{lemma}\label{lem:huang5}
Let $A(z_t,\eta)$ be some generic function of data $z_t$ and nuisance 
function $\eta\in \mathcal{H}_T$, where $\mathcal{H}_T$ is a shrinking neighborhood of $\eta_0$ which contains 
the estimator $\hat{\eta}_\ell$ with probability converging to one for all 
$\ell=1,\ldots,K$. Suppose that $\{z_t\}$ is geometrically $\beta$-mixing and define
\[
B_{\ell}(\eta)\equiv \frac{1}{T_\ell}\sum_{t\in I_\ell}E[A(z_t,\eta)]
\quad \text{and}\quad 
V_{\ell}(\eta)\equiv \frac{1}{T_\ell}\sum_{t\in I_\ell}[A(z_t,\eta)-E(A(z_t,\eta))].
\]
If for any non-stochastic sequence $\eta_T\in \mathcal{H}_T$ and any norm $\|\cdot\|$,
\begin{flalign}
\|B_\ell(\eta_T)\|&=O(\xi_{1T})\tag{A.1}\label{eq:A1} \\
\|V_\ell(\eta_T)\|&=O_p(\xi_{2T}),\tag{A.2}\label{eq:A2} 
\end{flalign}
then 
$
\Bigl\|T^{-1}\sum_{\ell=1}^{K}\sum_{t\in I_\ell}A(z_t,\hat{\eta}_\ell)\Bigr\|
=O_p(\xi_{1T}+\xi_{2T}).
$
\end{lemma}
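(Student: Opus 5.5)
The plan is to reduce the statement, block by block, to the case of a \emph{non-stochastic} nuisance value, using the separation between $I_\ell$ and $I^{\mathrm{qc}}_\ell$ together with $\beta$-mixing. Since $K$ is fixed, the triangle inequality gives
\[
\Bigl\|T^{-1}\sum_{\ell=1}^{K}\sum_{t\in I_\ell}A(z_t,\hat\eta_\ell)\Bigr\|\le \frac{1}{K}\sum_{\ell=1}^K\Bigl(\|B_\ell(\eta)\|\big|_{\eta=\hat\eta_\ell}+\|V_\ell(\eta)\|\big|_{\eta=\hat\eta_\ell}\Bigr).
\]
Here $B_\ell(\eta)$ and $V_\ell(\eta)$ are first computed for fixed $\eta$ and then evaluated at $\eta=\hat\eta_\ell$. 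So it suffices to show, for each fixed $\ell$, that $\|B_\ell(\hat\eta_\ell)\|=O_p(\xi_{1T})$ and $\|V_\ell(\hat\eta_\ell)\|=O_p(\xi_{2T})$.

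\textbf{Step 1 (uniformity).} I first upgrade the "any non-stochastic sequence" hypotheses to uniform statements over $\mathcal{H}_T$. For (A.1), I claim $\sup_{\eta\in\mathcal{H}_T}\|B_\ell(\eta)\|=O(\xi_{1T})$. If this failed, one could select $\eta_T\in\mathcal{H}_T$ with $\|B_\ell(\eta_T)\|/\xi_{1T}\to\infty$ along a subsequence, contradicting (A.1). Likewise, for (A.2), I claim that for every $\epsilon>0$ there are $M$ and $T_0$ such that $\sup_{\eta\in\mathcal{H}_T}P(\|V_\ell(\eta)\|>M\xi_{2T})\le\epsilon$ for $T\ge T_0$. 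Otherwise one picks near-maximizers $\eta_T$ of this probability and obtains a non-stochastic sequence along which $V_\ell(\eta_T)\neq O_p(\xi_{2T})$.

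\textbf{Step 2 (coupling).} $\hat\eta_\ell$ is a measurable function of $\{z_t:t\in I^{\mathrm{qc}}_\ell\}$. This set lies at distance at least $T_\ell$ from $I_\ell$ on both sides (past and future blocks). By Berbee's (Strassen's) coupling lemma, applied once to separate the future blocks and once for the past blocks, there exists a copy $\{z^*_t:t\in I_\ell\}$ with the following properties: it has the same law as $\{z_t:t\in I_\ell\}$, it is independent of $\{z_t:t\in I^{\mathrm{qc}}_\ell\}$, and
\[
P\bigl(z^*_t\neq z_t \text{ for some } t\in I_\ell\bigr)\le 2\beta(T_\ell)\le 2C e^{-c_\beta T/K}\to0.
\]
I expect this to be the main obstacle: one has to build the coupling against a two-sided conditioning set and check that stationarity makes the coupled block distributionally identical to the original, so that $E[A(z^*_t,\eta)]=E[A(z_t,\eta)]$.

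\textbf{Step 3 (conclusion).} Define $V^*_\ell(\eta)$ with $z^*_t$ in place of $z_t$. On the event $\{z^*=z \text{ on } I_\ell\}\cap\{\hat\eta_\ell\in\mathcal{H}_T\}$, whose probability tends to one, we have $V_\ell(\hat\eta_\ell)=V^*_\ell(\hat\eta_\ell)$. Conditioning on the training sample, $\hat\eta_\ell$ is fixed and independent of $z^*$, so
\[
P(\|V^*_\ell(\hat\eta_\ell)\|>M\xi_{2T},\ \hat\eta_\ell\in\mathcal{H}_T)\le E\Bigl[\sup_{\eta\in\mathcal{H}_T}P(\|V_\ell(\eta)\|>M\xi_{2T})\Bigr]\le\epsilon
\]
by Step 1. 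Hence
\[
P(\|V_\ell(\hat\eta_\ell)\|>M\xi_{2T})\le\epsilon+2\beta(T_\ell)+P(\hat\eta_\ell\notin\mathcal{H}_T),
\]
which is eventually below $2\epsilon$. Similarly, $\|B_\ell(\hat\eta_\ell)\|\le\sup_{\mathcal{H}_T}\|B_\ell\|=O(\xi_{1T})$ on $\{\hat\eta_\ell\in\mathcal{H}_T\}$. Summing over the $K$ blocks then yields $O_p(\xi_{1T}+\xi_{2T})$.
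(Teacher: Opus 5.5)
Your argument is correct and is essentially the Berbee/Strassen coupling proof behind Lemma A.6 of \citet{semenova2023}, which the paper cites without reproducing: you make (A.1)--(A.2) uniform over $\mathcal{H}_T$, replace the evaluation block by a copy independent of the training data, and condition on the training data. The only loose point is the two-sided coupling. It is cleanest done as a single application of Berbee's lemma to $\{z_t:t\in I_\ell\}$ against all of $\{z_t:t\in I^{\mathrm{qc}}_\ell\}$, where the triangle inequality in total variation bounds the mismatch probability by $2\beta(T_\ell)+\beta(3T_\ell)$ rather than $2\beta(T_\ell)$ (the extra term reflects dependence between the left and right parts of $I^{\mathrm{qc}}_\ell$); this change is immaterial, and the copy's equality in law comes from Berbee's lemma itself, not from stationarity.
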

\begin{proof}[Proof of Theorem~\ref{Theorem:NLO2}]
Let $\tilde{\theta}_h$ denote the oracle estimator that uses the true conditional mean function $g_{0,h}(x_t,\mathbf{z}_{t-1})$ and the true Riesz representer $\alpha_0(x_t,\mathbf{z}_{t-1})$. We decompose
\[
\sqrt{T}(\hat{\theta}_h-\theta_{0,h}) = \sqrt{T}(\tilde{\theta}_h-\theta_{0,h})+ \sqrt{T}(\hat{\theta}_h-\tilde{\theta}_h).
\]
The first term satisfies $\sqrt{T}(\tilde{\theta}_h-\theta_{0,h})\to_d N(0,V_h)$ by a CLT for weakly dependent time series (e.g., Corollary 24.7 of \citet{davidson1994}, since $\beta$-mixing implies strong mixing), using Assumptions~\ref{Assumption:mixing} and \ref{Assumption:boundedness}(iii). It remains to show that $\sqrt{T}(\hat{\theta}_h-\tilde{\theta}_h)=o_p(1)$.

Writing $\hat{\theta}_h-\tilde{\theta}_h=R_1+R_2+R_3$ with
\begin{flalign*}
  R_1 &\equiv \frac{1}{T}\sum_{\ell=1}^{K}\sum_{t\in I_\ell}\bigl(m(x_t,\mathbf{z}_{t-1},\hat{g}_\ell-g_0)+\alpha_0(x_t,\mathbf{z}_{t-1})(g_0(x_t,\mathbf{z}_{t-1})-\hat{g}_\ell(x_t,\mathbf{z}_{t-1}))\bigr), \\
  R_2 &\equiv \frac{1}{T}\sum_{\ell=1}^{K}\sum_{t\in I_\ell}(\hat{\alpha}_\ell(x_t,\mathbf{z}_{t-1})-\alpha_0(x_t,\mathbf{z}_{t-1}))(y_{t+h}-g_0(x_t,\mathbf{z}_{t-1})), \\
  R_3 &\equiv \frac{1}{T}\sum_{\ell=1}^{K}\sum_{t\in I_\ell}(\hat{\alpha}_\ell(x_t,\mathbf{z}_{t-1})-\alpha_0(x_t,\mathbf{z}_{t-1}))(g_0(x_t,\mathbf{z}_{t-1})-\hat{g}_\ell(x_t,\mathbf{z}_{t-1})),
\end{flalign*}
where $m(x_t,\mathbf{z}_{t-1},\hat{g}_\ell-g_0)\equiv (\hat{g}_\ell(x_t+\delta,\mathbf{z}_{t-1})-\hat{g}_\ell(x_t,\mathbf{z}_{t-1}))-(g_0(x_t+\delta,\mathbf{z}_{t-1})-g_0(x_t,\mathbf{z}_{t-1}))$, we show that $\sqrt{T}R_j=o_p(1)$ for $j=1,2,3$. Since $\{z_t\}$ is geometrically $\beta$-mixing (Assumption~\ref{Assumption:mixing}) and $I_\ell$ and $I^{\mathrm{qc}}_\ell$ are separated by at least $T_\ell$ time periods, the hypothesis of Lemma~\ref{lem:huang5} is satisfied, and it suffices to verify conditions~(\ref{eq:A1}) and (\ref{eq:A2}) for non-stochastic sequences $\eta_T\in\mathcal{G}_T\times\mathcal{A}_T$. Without loss of generality we take $\ell=1$, so that $t\in I_\ell$ corresponds to $t=1,\ldots,T_1$; we nonetheless retain the generic index $\ell$ in the sums below. We define $\xi_t\equiv (x_t,\mathbf{z}_{t-1})$ throughout.

We start with $R_1$ and apply Lemma~\ref{lem:huang5} with 
$$
A(\xi_{t},\eta)=m(x_t,\mathbf{z}_{t-1},g-g_0)+\alpha_0(x_t,\mathbf{z}_{t-1})(g_0(x_t,\mathbf{z}_{t-1})-g(x_t,\mathbf{z}_{t-1})),
$$ 
where $\eta=g$. By Neyman orthogonality, $E[A(\xi_{t},\eta_T)]=0$ for any $g_T\in\mathcal{G}_T$, so $B_\ell(\eta_T)=0$ and condition~(\ref{eq:A1}) holds trivially. For condition~(\ref{eq:A2}), we show $\mathrm{Var}(V_\ell(\eta_T))=o(T_\ell^{-1})$. By stationarity of $z_t$ (and hence of $\xi_{t}$),
\[
\mathrm{Var}(V_\ell(\eta_T)) = \frac{1}{T_\ell}E[A(\xi_{t},\eta_T)^2] + \frac{2}{T_\ell^2}\sum_{j=1}^{T_\ell-1}(T_\ell-j)\,\mathrm{Cov}(A(\xi_{t},\eta_T),A(\xi_{t-j},\eta_T)).
\]
For the first term, we have
\begin{flalign*}
E[A(\xi_{t},\eta_T)^2]\le 2\left(E[m(x_t,\mathbf{z}_{t-1},g_T-g_0)^2]+E[\alpha_0(x_t,\mathbf{z}_{t-1})^2(g_T(x_t,\mathbf{z}_{t-1})-g_0(x_t,\mathbf{z}_{t-1}))^2]\right).
\end{flalign*}
Using Assumption~\ref{Assumption:boundedness}(i) and a change of variables, 
\begin{flalign*}
  &E[m(x_t,\mathbf{z}_{t-1},g_T - g_0)^2]\\
  &\le 2 \left(\int \left(\int (g_T(x_t,\mathbf{z}_{t-1})-g_0(x_t,\mathbf{z}_{t-1}))^2 \left(\frac{f_0(x_t-\delta|\mathbf{z}_{t-1})}{f_0(x_t|\mathbf{z}_{t-1})}+1\right)f_0(x_t|\mathbf{z}_{t-1})dx_t\right)f_0(\mathbf{z}_{t-1})d\mathbf{z}_{t-1}\right)\\
  &\le 2(\bar{\alpha}+2)r_{g,T}^2=o(1),
\end{flalign*}
where $f_0(\mathbf{z}_{t-1})$ denotes the marginal density of $\mathbf{z}_{t-1}$, the last inequality uses $\sup_{x,z}|\alpha_0(x,z)|<\bar\alpha$ and the last equality uses Assumption~\ref{Assumption:learners}(i). An analogous argument gives $E[\alpha_0(x_t,\mathbf{z}_{t-1})^2(g_T(x_t,\mathbf{z}_{t-1})-g_0(x_t,\mathbf{z}_{t-1}))^2]\le\bar\alpha^2 r_{g,T}^2=o(1)$, so $T_\ell^{-1}E[A(\xi_{t},\eta_T)^2]=o(T_\ell^{-1})$. For the second term, $A(\xi_{t},\eta_T)$ inherits the geometric $\beta$-mixing of $\{z_t\}$, and since $\beta$-mixing implies strong mixing, by Corollary 14.3 of \citet{davidson1994}, for $q>2$ as in Assumption~\ref{Assumption:boundedness}(ii),
\[
|\mathrm{Cov}(A(\xi_{t},\eta_T),A(\xi_{t-j},\eta_T))|\le C\,\beta(j)^{1-2/q}\|A(\xi_{t},\eta_T)\|_q^2.
\]
By the triangle inequality and Assumption~\ref{Assumption:boundedness}(i), $|A(\xi_{t},\eta_T)|\le |m(x_t,\mathbf{z}_{t-1},g_T-g_0)|+\bar\alpha\,|g_T(x_t,\mathbf{z}_{t-1})-g_0(x_t,\mathbf{z}_{t-1})|$, so $\|A(\xi_{t},\eta_T)\|_q \le C(\bar\alpha)\,r_{g,q,T} = o(1)$ by Assumption~\ref{Assumption:learners}(i). Since $\beta(j)\le Ce^{-c_\beta j}$ with $c_\beta>0$, summing over $j$ gives
\[
\frac{2}{T_\ell^2}\sum_{j=1}^{T_\ell-1}(T_\ell-j)|\mathrm{Cov}(A(\xi_{t},
\eta_T),A(\xi_{t-j},\eta_T))|\le \frac{C(\bar\alpha)\,r_{g,q,T}^2}{T_\ell}
\sum_{j=1}^{\infty}e^{-c_\beta j(1-2/q)}=o(T_\ell^{-1}),
\]
where $\sum_{j=1}^{\infty}e^{-c_\beta j(1-2/q)}<\infty$ because $c_\beta(1-2/q)>0$ 
for $c_\beta>0$ and $q>2$, and $r_{g,q,T}^2=o(1)$ by 
Assumption~\ref{Assumption:learners}(i). Combining the two bounds gives $\mathrm{Var}(V_\ell(\eta_T))=o(T_\ell^{-1})$, so $\sqrt{T}R_1=o_p(1)$.

We next consider $R_2$. We apply Lemma~\ref{lem:huang5} with $A(\xi_{t},\eta_T)=(\alpha_T(x_t,\mathbf{z}_{t-1})-\alpha_0(x_t,\mathbf{z}_{t-1}))e_{t+h}$, where $e_{t+h}\equiv y_{t+h}-g_0(x_t,\mathbf{z}_{t-1})$. Since $E(e_{t+h}|x_t,\mathbf{z}_{t-1})=0$, an application of the LIE gives $E[A(\xi_{t},\eta_T)]=0$, so $B_\ell(\eta_T)=0$ and condition~(\ref{eq:A1}) holds trivially. For condition~(\ref{eq:A2}), let $f_t\equiv(\alpha_T(x_t,\mathbf{z}_{t-1})-\alpha_0(x_t,\mathbf{z}_{t-1}))e_{t+h}$. By stationarity,
$
\mathrm{Var}(V_\ell(\eta_T)) = \frac{1}{T_\ell}E(f_t^2) + \frac{2}{T_\ell^2}\sum_{j=1}^{T_\ell-1}(T_\ell-j)\,\mathrm{Cov}(f_t,f_{t-j}).
$
For the variance, by LIE and Assumption~\ref{Assumption:boundedness}(ii),
\[
E(f_t^2)=E\left[(\alpha_T(x_t,\mathbf{z}_{t-1})-\alpha_0(x_t,\mathbf{z}_{t-1}))^2 E(e_{t+h}^2|x_t,\mathbf{z}_{t-1})\right]
\le\bar{\sigma}_q^2\,r_{\alpha,T}^2=o(1),
\]
by Assumption~\ref{Assumption:learners}(i) (since $r_{\alpha,T}=r_{\alpha,2,T}\le r_{\alpha,q,T}$ by H\"older's inequality), so $T_\ell^{-1}E(f_t^2)=o(T_\ell^{-1})$. For the autocovariance terms, $f_t$ is a function of $(y_{t+h},\xi_t)$ and hence inherits the geometric $\beta$-mixing of $\{z_t\}$. By the same covariance inequality as above, $|\mathrm{Cov}(f_t,f_{t-j})|\le C\,\beta(j)^{1-2/q}\|f_t\|_q^2.$ By LIE and Assumption~\ref{Assumption:boundedness}(ii), $E[|f_t|^q]\le\bar\sigma_q^q\,r_{\alpha,q,T}^q$, so $\|f_t\|_q\le\bar\sigma_q\,r_{\alpha,q,T}=o(1)$ by Assumption~\ref{Assumption:learners}(i). Summing over lags gives
\[
\frac{2}{T_\ell^2}\sum_{j=1}^{T_\ell-1}(T_\ell-j)|\mathrm{Cov}(f_t,f_{t-j})|\le\frac{C\,\bar\sigma_q^2\,r_{\alpha,q,T}^2}{T_\ell}\sum_{j=1}^{\infty}e^{-c_\beta j(1-2/q)}=o(T_\ell^{-1}),
\]
by Assumption~\ref{Assumption:learners}(i). Combining both bounds gives $\mathrm{Var}(V_\ell(\eta_T))=o(T_\ell^{-1})$, so $\sqrt{T}R_2=o_p(1)$.

Finally, we consider $R_3$. We apply Lemma~\ref{lem:huang5} with $$A(\xi_{t},\eta)=(\alpha(x_t,\mathbf{z}_{t-1})-\alpha_0(x_t,\mathbf{z}_{t-1}))(g(x_t,\mathbf{z}_{t-1})-g_0(x_t,\mathbf{z}_{t-1}))$$ where $\eta=(\alpha,g)$. For the bias term, by Cauchy-Schwarz,
\[
|E[A(\xi_{t},\eta_T)]|\le \bigl(E[(\alpha_T(x_t,\mathbf{z}_{t-1})-\alpha_0(x_t,\mathbf{z}_{t-1}))^2]\bigr)^{1/2}\bigl(E[(g_T(x_t,\mathbf{z}_{t-1})-g_0(x_t,\mathbf{z}_{t-1}))^2]\bigr)^{1/2},
\]
which is bounded by $r_{\alpha,T}\,r_{g,T}=o(T^{-1/2})$, given Assumption~\ref{Assumption:learners}(ii). So $B_\ell(\eta_T)=o(T^{-1/2})$ and condition~(\ref{eq:A1}) is satisfied. For the stochastic term, by stationarity, the triangle inequality, and Cauchy-Schwarz,
$
E|V_\ell(\eta_T)|\le 2E|A(\xi_{t},\eta_T)|\le 2r_{\alpha,T}\,r_{g,T}=o(T^{-1/2}),
$
by Assumption~\ref{Assumption:learners}(ii), so condition~(\ref{eq:A2}) holds by Markov's inequality. This completes the proof that $\sqrt{T}R_3=o_p(1)$.
\end{proof}
\end{document}